\documentclass[a4paper,12pt]{llncs}
\pagestyle{headings}
\usepackage{amsmath,amssymb}
\usepackage[left=4.17cm,top=4.17cm,right=4.17cm,bottom=4.17cm]{geometry}
\usepackage{epsfig}

\usepackage{color}

\usepackage{graphicx}

\usepackage{wrapfig}
\usepackage{amssymb}
\usepackage{amsmath}

\usepackage{longtable}
\usepackage{multirow}
\usepackage{array}
\usepackage{paralist}

\date{}
\begin{document}

\title{\Large{The Iterated Local Transitivity Model for Tournaments}\thanks{Research supported by a grant from NSERC.}}
\author{Anthony Bonato\inst{1}, Ketan Chaudhary\inst{1}}
\institute{Toronto Metropolitan University}

\maketitle

\begin{abstract}

A key generative principle within social and other complex networks is transitivity, where
friends of friends are more likely friends. We propose a new model for highly dense complex networks based on transitivity, called the Iterated Local Transitivity Tournament (or ILTT) model. In ILTT and a dual version of the model, we iteratively apply the
principle of transitivity to form new tournaments. The resulting models generate tournaments with small average distances as observed in real-world complex networks. We explore properties of small subtournaments or motifs in the ILTT model and study its graph-theoretic properties, such as Hamilton cycles, spectral properties, and domination numbers. We finish with a set of open problems and the next steps for the ILTT model.

\end{abstract}

\section{Introduction}

The vast volume of data mined from the web and other networks from the physical, natural, and social sciences, suggests a view of many real-world networks as self-organizing phenomena satisfying common properties. Such complex networks capture interactions in many phenomena, ranging from friendship ties in Facebook, to Bitcoin transactions, to interactions between proteins in living cells. Complex networks evolve via several mechanisms such as preferential attachment or copying that predict how links between nodes are formed over time. Key empirically observed properties of complex networks such as the small world property (which predicts small distances between typical pairs of nodes and high local clustering) have been successfully captured by models such as preferential attachment \cite{ba,bol}. See the book \cite{bbook} for a survey of early complex network models, along with \cite{at}.

\emph{Balance theory} cites mechanisms to complete triads (that is, subgraphs consisting of three nodes) in social and other complex networks \cite{ek,he}. A central mechanism in balance theory is \emph{transitivity}: if $x$ is a friend of $y,$ and $y$ is a friend of $z,$ then $x$ is a friend of $z$; see, for example, \cite{scott}. Directed networks of ratings or trust scores and models for their propagation were first considered in \cite{guha}. \emph{Status theory} for directed networks, first introduced in \cite{lhk}, was motivated by both trust propagation and balance theory. While balance theory focuses on likes and dislikes, status theory posits that a directed link indicates that the creator of the link views the recipient as having higher status. For example, on Twitter or other social media, a directed link captures one user following another, and the person they follow may be of higher social status. Evidence for status theory was found in directed networks derived from Epinions, Slashdot, and Wikipedia \cite{lhk}. For other applications of status theory and directed triads in social networks, see also \cite{shc,sm}.

The \emph{Iterated Local Transitivity} (\emph{ILT}) model introduced in \cite{ilt1,ilt} and further studied in \cite{ilm,ilat,mason}, simulates structural properties in complex networks emerging from transitivity. Transitivity gives rise to the notion of \emph{cloning}, where a new node $x$ is adjacent to all of the neighbors of some existing node $y$. Note that in the ILT model, the nodes have local influence within their neighbor sets. The ILT model simulates many properties of social networks. For example, as shown in \cite{ilt}, graphs generated by the model densify over time and exhibit bad spectral expansion. In addition, the ILT model generates graphs with the small-world property, which requires graphs to have low diameter and high clustering coefficient compared to random graphs with the same number of nodes and expected average degree. A directed analogue of the ILT model was introduced in \cite{directed}.

Tournaments are directed graphs where each pair of nodes shares exactly one directed edge (or arc). Tournaments are simplified representations of highly interconnected structures in networks. For example, we may consider users on Twitter or TikTok that follow each other or those on Reddit focused on a particular topic or community. A tournament arises from such social networks by assigning an arc  $(u,v)$ if the user $u$ responds more frequently to the posts of $v$ than $v$ does to $u$. Other examples of tournaments in real-world networks include those in sports, with edges corresponding to one player or team winning over another. Such directed cliques are of interest in network science as one type of \emph{motif}, which are certain small-order significant subgraphs. Tournaments in social and other networks grow organically; therefore, it is natural to consider models simulating their evolution.

The present paper explores a version of the ILT model for tournaments. The \emph{Iterated Local Transitivity for Tournaments (ILTT)} model is deterministically defined over discrete time-steps as follows. The only parameter of this deterministic model is the initial tournament $G=G_{0}$, which is called the \emph{base}. For a non-negative integer $t$, $G_{t}$ represents the tournament at time-step $t$. Suppose that the tournament $G_{t}$ has been defined for a fixed time-step $t\geq 0$. To form $G_{t+1}$, for each $x \in V(G_{t}$), add a new node $x'$ called the \emph{clone} of $x$. We refer to $x$ as the \emph{parent} of $x',$ and $x'$ as the \emph{child} of $x.$ We add the arc $(x',x)$, and for each arc $(x,y)$ in $G_t$, we add the arc $(x',y')$ to $G_{t+1}.$ For arcs $(x,z)$ and $(y,x)$ in $G_{t}$, we add arcs $(x',z)$ and $(y,x'),$ respectively, in $G_{t+1}$. See Figure~1. We refer to $G_t$ as an \emph{ILTT tournament}. Note that the subtournament of clones in $G_{t+1}$ is isomorphic to $G_t;$ further, clones share the same adjacencies as their parents.

The \emph{dual} of a tournament reverses the orientation of each of its directed edges. The analogously defined {Dual Iterated Local Transitivity for Tournaments (or ILTT$_d$)} shares all arcs as defined in the ILTT model, but replaces the subtournament of clones by its dual: if $(x,y)$ is an arc in $G_t,$ then $(y',x')$ is an arc in $G_{t+1}.$ To better distinguish the models, we use the notation $\overrightarrow{G_t}$ and $\overleftarrow{G_t}$ for an ILTT and ILTT$_d$ tournament at time-step $t\ge 1$, respectively. See Figure~\ref{fig:3_cycle_ILTT}.

\begin{figure}[htpb!]
\centering
\includegraphics[scale=0.45]{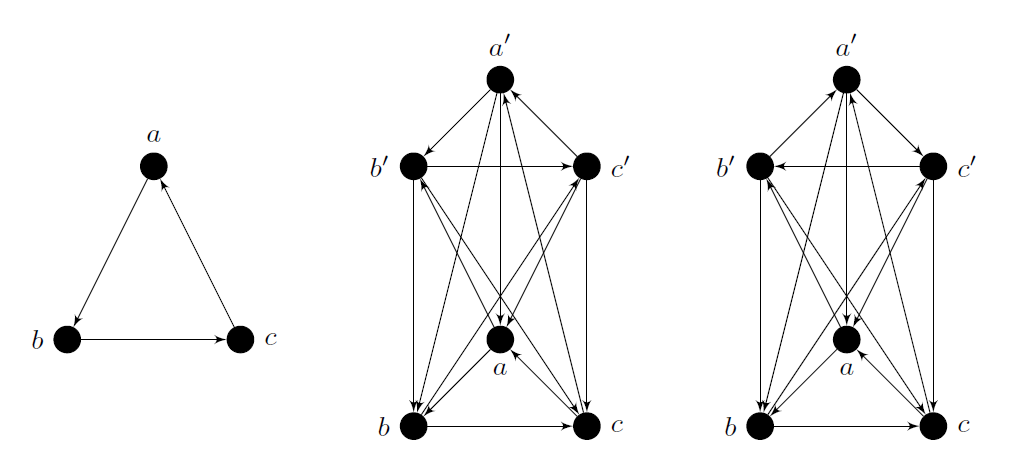}\
\caption{If $G_0$ is the directed 3-cycle, then the first-time step of the ILTT tournament is in the middle and the ILLT$_d$ tournament is on the right.}\label{fig:3_cycle_ILTT}
\end{figure}

We will demonstrate that the ILTT and ILTT$_d$ models simulate properties observed in complex
graphs and digraphs, such as the small-world property. Section~2 considers distances in ILTT and ILTT$_d$ tournaments and shows
that their average distances are bounded above by a constant. We consider motifs in Section~3 and show that while the family of ILTT$_d$ tournaments are universal in the sense that they eventually contain any given tournament (as proved in Theorem~\ref{universality}), the same does not hold for ILTT tournaments. Section~4 focuses on graph-theoretic properties of the models, such as Hamiltonian and spectral properties, and their domination number. We conclude with future directions and open problems.

Throughout the paper, we consider finite, simple tournaments. A tournament is \emph{strong} if for each pair of nodes $x$ and $y$, there are directed paths connecting $x$ to $y$ and $y$ to $x$. If $x_1,x_2,\dots ,x_n$ are nodes of a tournament $G$ so that $(x_i,x_j)$ is an arc whenever $i<j$, then $G$ is a \emph{linear order}. A linear order on three nodes is a \emph{transitive 3-cycle}. For background on tournaments, the reader is directed to \cite{bang-jensen-gutin-2009} and to \cite{west} for background on graphs.  For background on social and complex networks, see \cite{bbook,CL}.

\section{Small world property}

Tournaments are highly dense structures, so local clustering or densification properties are less interesting. However, the presence of short distances remains relevant. In a strong tournament $G$, the distance from nodes $x$ to $y$ is the length of the shortest path from $x$ to $y$, denoted $d_G(x,y)$, or $d(x,y)$ if $G$ is clear from context. Although $d_G$ satisfies the triangle inequality, it is not a metric as it need not be symmetric; that is, $d_G$ is a quasimetric. The maximum distance among pairs of nodes in $G$ is its \emph{diameter}, written $\mathrm{diam}(G)$.

We consider the diameters of ILTT and ILTT$_d$ tournaments by studying how diam($\overrightarrow{G_t}$) and diam($\overleftarrow{G_t}$) relate to the diameter of their bases. As we only consider distances in strong tournaments, we focus on bases that are non-trivial strong tournaments. We first need the following theorem to ensure tournaments generated by the model are strong.

\begin{theorem}\label{ILTT+_strong}
If $G_0$ is a tournament of order at least 3, then for all integers $t\ge 1$, $\overrightarrow{G_t}$ is strong if and only if $G$ is strong.
\end{theorem}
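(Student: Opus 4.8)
The plan is to handle both implications by analyzing a single step of the construction and then inducting on $t$. For the ``if'' direction I would isolate the lemma: if $H$ is a strong tournament of order at least $3$, then the tournament $H^{\ast}$ obtained from $H$ by one ILTT step is again strong (and of order at least $6$), so iterating this from the base $G_{0}$ yields that $\overrightarrow{G_{t}}$ is strong for every $t\ge 1$. For the ``only if'' direction I would prove the contrapositive via the companion lemma: if $H$ is not strong, then neither is $H^{\ast}$; iterating from $G_{0}$ again finishes. The only role of the hypothesis ``order at least $3$'' is to exclude the degenerate case, since a one-vertex base is vacuously strong but $\overrightarrow{G_{1}}$ is then the non-strong $2$-vertex tournament.

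For the ``if'' lemma, write $V(H)=\{v_{1},\dots,v_{n}\}$ with clones $v_{1}',\dots,v_{n}'$ and unpack the arc rules: the original vertices induce a copy of $H$, which is strong; $v_{i}'\to v_{i}$ for each $i$; each $v_{i}'$ dominates every out-neighbour of $v_{i}$ among the originals; and every in-neighbour of $v_{i}$ among the originals dominates $v_{i}'$. Since $H$ is strong and $n\ge 2$, each $v_{i}$ has an in-neighbour $v_{j}$ in $H$, so $v_{j}\to v_{i}'$ in $H^{\ast}$; combined with a directed path from $v_{i}$ to $v_{j}$ inside the strong original copy, this shows every original vertex reaches its own clone. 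More generally, from any original vertex $u$ one reaches every original vertex (strongness of the copy) and every clone $v_{j}'$ (route inside the copy to an in-neighbour of $v_{j}$, then take the arc into $v_{j}'$); and every clone $v_{i}'$ reaches everything via the arc $v_{i}'\to v_{i}$. Hence $H^{\ast}$ is strong.

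For the ``only if'' lemma, recall that a tournament on at least two vertices fails to be strong precisely when its vertex set partitions into nonempty parts $(A,B)$ with every arc between them directed from $A$ to $B$ (take $A$ to be the source strong component of the condensation). Given such a partition of $H$, the arc rules give, for all $a\in A$ and $b\in B$, the arcs $(a,b)$, $(a',b')$, $(a',b)$, and $(a,b')$ in $H^{\ast}$, so $(A\cup A',\,B\cup B')$ is a partition of $H^{\ast}$ of the same type and $H^{\ast}$ is not strong. Applying the appropriate lemma at each time-step starting from $G_{0}$ then gives the theorem.

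I expect the one genuinely non-routine point to be showing, in the ``if'' lemma, that a parent reaches its clone --- this is exactly where strong connectivity of $H$ is used, through the in-neighbour one routes through. Everything else is careful bookkeeping with the ILTT arc rules, where the main thing to keep straight is that this is the ILTT construction (not ILTT$_{d}$), so the subtournament of clones is a copy of $H$ rather than its dual.
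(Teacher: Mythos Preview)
Your proposal is correct and follows the same overall strategy as the paper: prove the one-step implications $H$ strong $\Leftrightarrow$ $H^{\ast}$ strong and then induct on $t$.

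The execution differs in two minor respects worth noting. For the forward direction, the paper shows that a parent $x$ reaches its clone $x'$ by invoking the fact that every vertex of a strong tournament of order at least $3$ lies on a directed $3$-cycle, and then builds an explicit cycle through $x$ and $x'$; your argument avoids this auxiliary fact by simply routing through an in-neighbour of $x$, which is a little more self-contained. For the reverse direction, the paper uses the homomorphism that collapses each clone to its parent (a path in $H^{\ast}$ projects to a walk in $H$), whereas you argue via a dominating bipartition $(A,B)$ lifted to $(A\cup A',\,B\cup B')$; these are dual viewpoints of the same obstruction and equally valid.
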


\begin{proof}
We prove the forward direction in induction on $t \ge 0.$ The base case is immediate, so suppose that for a fixed $t\ge 1$, $\overrightarrow{G_t}$ is strong.

The subtournament of $\overrightarrow{G_{t+1}}$ induced by the clones is strong. Let $u$ and $v$ be distinct nodes of $\overrightarrow{G_t}$. We show that $u$ and $v'$ are in the same strong component. Let $P$ be a directed path from $u$ to $v$ in $\overrightarrow{G_t}$, and let $(w,v)$ be the final arc of $P$. Let $P_1$ be the directed path obtained by replacing $(w,v)$ with $(w,v')$ in $P$ and let $P_2$ be a directed path from $v'$ to $u'$.
\begin{figure}[htpb!]
 \centering
\includegraphics[scale=0.45]{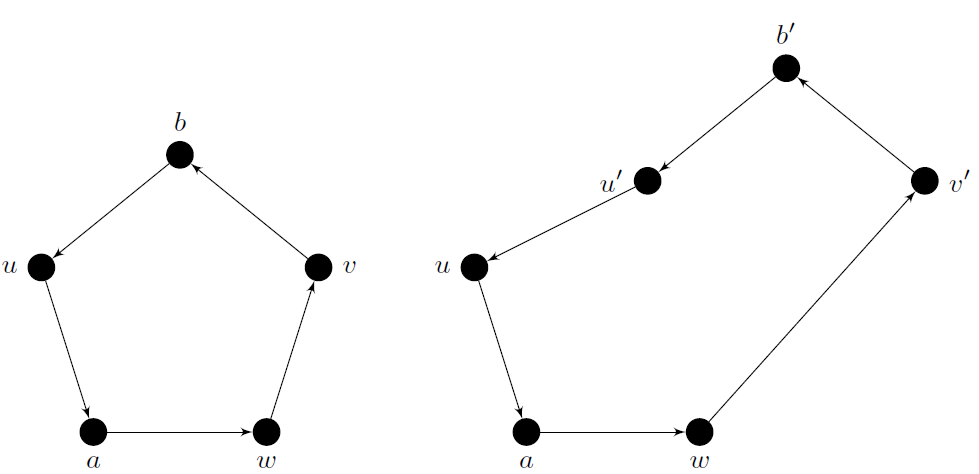}\label{fig:ILTT_G_strong}
\caption{A cycle passing through $u$ and $v$ in $G$ can be used to obtain a directed cycle passing through $u$ and $v'$.}

\end{figure}

If we traverse along $P_1$ from $u$ to $v',$ traverse $P_2$, and then follow the arc $(u',u)$, we have a directed cycle containing $u$ and $v'$. See Figure~2 for an example.

As $\overrightarrow{G_t}$ is strong, $x$ is in a directed 3-cycle $C$. Suppose that $C$ has nodes $x,$ $u,$ and $v$ and arcs $(x,u),$ $(u,v)$, and $(v,x).$ We then have that $(x,u'),$ $(u',v'),$ $(v,x')$, and $(x',x)$ is a directed cycle in $\overrightarrow{G_1}$. It follows that $x$ and $x'$ are in the same strong component. As an aside, we note that there can be no directed 2-path from $x$ to $x'$, and so $d(x,x')=3$. Hence, each pair of nodes of $\overrightarrow{G_{t+1}}$ are in the same strong component, and the forward direction follows.

If $\overrightarrow{G_t}$ is not strong, then there are nodes $x$ and $y$ of $\overrightarrow{G_t}$ so there is no directed path from $x$ to $y.$ If there was a directed path $P$ from $x$ to $y$ in $\overrightarrow{G_{t+1}},$ then by exchanging each clone for its parent in $P$, we would find a directed path from $x$ to $y$ in $\overrightarrow{G_t},$ which is a contraction.  Hence, $\overrightarrow{G_{t+1}}$ is not strong.
\qed
\end{proof}

We consider an analogous result on strong ILTT$_d$ tournaments.
\begin{theorem}
Suppose that $G=G_0$ has no sink. We then have that for all $t\ge 1$, $\overleftarrow{G_t}$ is strong.
\end{theorem}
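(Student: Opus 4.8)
The plan is to reduce the statement to a single ILTT$_d$ step and then route all paths through the terminal strong component of the base. Since $\overleftarrow{G_t}$ is obtained from $\overleftarrow{G_{t-1}}$ by exactly one ILTT$_d$ step, and every strong tournament of order at least $3$ has no sink, it suffices to prove the following: \emph{if $H$ is any tournament with no sink, then the tournament $\Gamma$ obtained from $H$ by one ILTT$_d$ step is strong}. Granting this, $\overleftarrow{G_1}$ is strong (apply it with $H=G_0$); being a strong tournament of order at least $3$, it has no sink, so $\overleftarrow{G_2}$ is strong, and induction on $t$ finishes the proof.

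To prove the displayed claim, write $V=V(H)$ and $V'=\{x':x\in V\}$ for the clones, and recall the structure of $\Gamma$: the originals induce $H$, the clones induce the dual $H^{*}$ of $H$ (so $(x',y')$ is an arc iff $(y,x)$ is an arc of $H$), we have $(x',x)$ for every $x$, and for distinct $x,y$ the arc between $x$ and $y'$ points the same way as the arc between $x$ and $y$ in $H$. Let $S_1\Rightarrow\cdots\Rightarrow S_k$ be the strong components of $H$ in the transitive order they inherit. The no-sink hypothesis forces $|S_k|\ge 2$, since a singleton terminal component would be a node of out-degree $0$; hence $S_k$ induces a non-trivial strong tournament. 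Reversing arcs, the set $S_k':=\{x':x\in S_k\}$ is the \emph{source} strong component of the clone subtournament $H^{*}$, so every clone is reachable from every node of $S_k'$ using only arcs among clones.

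The heart of the argument is then a single ``highway''. Given an original node $a$: if $a\notin S_k$ it has an out-neighbour $s\in S_k$ (every arc from $a$'s component to $S_k$ points into $S_k$), and otherwise put $s=a$. Since $S_k$ is strong with at least two nodes, $s$ has an out-neighbour $s_2\in S_k$, and by the inheritance rule $(s,s_2')$ is an arc of $\Gamma$, so we have reached $S_k'$. From $s_2'$ and any original $c$, we reach the clone $c'$ using only clone arcs -- a path inside $S_k'$ if $c\in S_k$, and the direct arc $s_2'\to c'$ otherwise -- and then $c'\to c$. Hence every original reaches every original and, stopping one step early, every clone; and since $a'\to a$ for every clone $a'$, clones reach everything too. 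Thus $\Gamma$ is strong.

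I expect the only real obstacle to be checking that this highway is genuinely available for all endpoints -- in particular that the jump from $S_k$ to $S_k'$ always exists, which is exactly where the no-sink hypothesis is spent (it makes $S_k$ non-trivial, so $s$ has an out-neighbour inside $S_k$). The remainder is bookkeeping: tracking how arcs between originals and clones are inherited from $H$, noting that duality reverses the order of the strong components, and invoking the standard fact that a source strong component reaches every node of a tournament.
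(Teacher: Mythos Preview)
Your argument is correct. The reduction to a single ILTT$_d$ step is sound (a tournament with no sink has at least three nodes, so $\overleftarrow{G_1}$ is strong of order $\ge 6$ and hence has no sink, and the induction runs), and the ``highway'' through the terminal strong component $S_k$ and its dual copy $S_k'$ does what you claim: the no-sink hypothesis forces $|S_k|\ge 2$ (hence $\ge 3$), giving every $s\in S_k$ an out-neighbour inside $S_k$ and thus the crucial jump $s\to s_2'$; and since $S_k'$ is the initial strong component of the clone subtournament, every clone is reachable from $s_2'$.

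However, your route is genuinely different from the paper's. The paper does not decompose $H$ into strong components at all; it works entirely locally. For any node $x$ it picks an out-neighbour $y$ (guaranteed by no sink) and exhibits the $3$-cycle $x\to y'\to x'\to x$; and for any arc $(w_1,w_2)$ it picks an out-neighbour $u$ of $w_2$ and exhibits the $5$-cycle $w_1\to w_2\to u'\to w_2'\to w_1'\to w_1$, which already contains $w_1,w_2,w_1',w_2'$. These two explicit cycles suffice to put every pair of nodes in a common strong component. The payoff of the paper's local approach is that the short cycles immediately give distance information (in particular $d(x,x')=2$, used later); the payoff of your structural approach is that it makes transparent \emph{where} the no-sink hypothesis is spent (it is exactly the non-triviality of the terminal component) and gives a single uniform path template between any two nodes.
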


\begin{proof}
We prove the case that $G_1$ is strong, with the cases $t\ge 2$ following in an analogous way by induction.
For a node $x$ in $G$, there is some $y$ such that $(x,y)$ is an arc. The directed cycle with nodes $(x,y'),(y',x'),(x',x)$ shows that $x$ and $x'$ are in the same strong component.

Suppose $w_1$ and $w_2$ are two nodes in $G$. Without loss of generality, suppose that $(w_1,w_2)$ is an arc. As $G$ has no sink, there is some node $u$ such that $(w_2,u)$ is an arc. The arcs $$(w_1,w_2),(w_2,u'),(u',{w_2}'),({w_2}',{w_1}'),({w_1}',w_1)$$ form a directed cycle containing $w_1$, $w_2,$ ${w_1}'$, and ${w_2}'$. In particular, any two nodes in $G$, any two clones, and any node of $G$ along with a clone are in the same strong component. Hence, $\overleftarrow{G_1}$ is strong. \qed
\end{proof}

Observe that if $G_0$ has a sink, then $\overleftarrow{G_1}$ may not be strong, as is the case for $G_0$ equaling a single directed edge.

We next prove the following lemma on distances in ILTT.

\begin{lemma}\label{diam_L(G)}
Suppose that $G_0$ is strong with order at least 3. For all integers $t\ge 1$ and for distinct nodes $x$ and $y$ of $\overrightarrow{G_{t+1}}$,
    $$d_{\overrightarrow{G_t}}(x,y) = d_{\overrightarrow{G_{t+1}}}(x,y) = d_{\overrightarrow{G_{t+1}}}(x,y') = d_{\overrightarrow{G_{t+1}}}(x',y) = d_{\overrightarrow{G_{t+1}}}(x',y').$$
\end{lemma}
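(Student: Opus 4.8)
The plan is to show that each of the four quantities $d_{\overrightarrow{G_{t+1}}}(x,y)$, $d_{\overrightarrow{G_{t+1}}}(x,y')$, $d_{\overrightarrow{G_{t+1}}}(x',y)$, $d_{\overrightarrow{G_{t+1}}}(x',y')$ equals $\delta := d_{\overrightarrow{G_t}}(x,y)$. Since $G_0$ is strong of order at least $3$, Theorem~\ref{ILTT+_strong} guarantees that $\overrightarrow{G_t}$ and $\overrightarrow{G_{t+1}}$ are strong, so all of these distances are finite. I would begin by reading off, from the definition of the model, the precise adjacency structure of $\overrightarrow{G_{t+1}}$: its restriction to the parent set is exactly $\overrightarrow{G_t}$; for distinct parents $u$ and $v$, each of the arcs between $u$ and $v'$, between $u'$ and $v$, and between $u'$ and $v'$ is oriented the same way as the arc between $u$ and $v$ in $\overrightarrow{G_t}$; and the only remaining arcs are the vertical arcs $(u',u)$. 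Two consequences are worth isolating. First (lifting): if $(u,v)$ is an arc of $\overrightarrow{G_t}$, then all four of $(u,v)$, $(u,v')$, $(u',v)$, $(u',v')$ are arcs of $\overrightarrow{G_{t+1}}$. Second (projection): the map $\phi$ fixing every parent and sending every clone to its parent satisfies, for every arc $(a,b)$ of $\overrightarrow{G_{t+1}}$, that either $\phi(a)=\phi(b)$ --- which happens precisely when $\{a,b\}=\{u',u\}$ --- or $(\phi(a),\phi(b))$ is an arc of $\overrightarrow{G_t}$.

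For the upper bounds, I would fix a shortest $x$--$y$ path $x=u_0,u_1,\dots,u_\delta=y$ in $\overrightarrow{G_t}$; its vertices are pairwise distinct, so in particular consecutive vertices are distinct. Applying the lifting consequence vertex by vertex, the sequences $u_0,u_1,\dots,u_\delta$, then $u_0,u_1',\dots,u_\delta'$, then $u_0',u_1',\dots,u_{\delta-1}',u_\delta$, and finally $u_0',u_1',\dots,u_\delta'$ are directed paths of length $\delta$ in $\overrightarrow{G_{t+1}}$ from $x$ to $y$, from $x$ to $y'$, from $x'$ to $y$, and from $x'$ to $y'$, respectively. Hence each of the four distances is at most $\delta$.

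For the matching lower bounds I would use $\phi$. Given any $a\in\{x,x'\}$, $b\in\{y,y'\}$ and a shortest directed $a$--$b$ path $P$ in $\overrightarrow{G_{t+1}}$, apply $\phi$ to each vertex of $P$ and delete the steps collapsed by $\phi$; by the projection consequence the result is a directed walk in $\overrightarrow{G_t}$ of length at most $|P|$ from $\phi(a)=x$ to $\phi(b)=y$. Since $x\neq y$, such a walk has length at least $\delta$, so $|P|\ge\delta$. Combining with the previous paragraph, all four distances equal $\delta=d_{\overrightarrow{G_t}}(x,y)$, which is the statement. The one genuinely fiddly point is the first paragraph --- verifying from the terse description of the model exactly which way each parent--clone and clone--clone arc is oriented, and confirming that no non-vertical arc is collapsed by $\phi$; once that bookkeeping is done, the lifting and projection arguments are short and essentially mirror each other.
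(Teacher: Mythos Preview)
Your proof is correct and follows essentially the same approach as the paper: an upper bound by lifting a shortest $x$--$y$ path in $\overrightarrow{G_t}$ to paths between the four endpoint pairs in $\overrightarrow{G_{t+1}}$, and a matching lower bound via the projection $\phi$ (the paper calls it $f$) viewed as a homomorphism into $\overrightarrow{G_t}$ with loops. The only cosmetic difference is that the paper lifts by replacing just the two endpoints of a shortest path, whereas you lift entire blocks of intermediate vertices to clones; both variants work for the same reason.
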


\begin{proof}
Let $x$ and $y$ be nodes of $\overrightarrow{G_{t}}$. If $\alpha \in \{x,x'\}$ and $\beta \in \{y,y'\}$, then for any directed path $P$ from $x$ to $y$, replacing $x$ with $\alpha$ in the first arc of $P$ and replacing $y$ with $\beta$ in the final arc of $P$ is a directed path from $\alpha$ to $\beta$ in $\overrightarrow{G_{t+1}}$. Thus, we have that $d_{\overrightarrow{G_{t+1}}}(\alpha,\beta) \leq d_{\overrightarrow{G_{t}}}(x,y)$.

Let $f:V(\overrightarrow{G_{t+1}}) \rightarrow V(\overrightarrow{G_{t}})$ be the function mapping each clone to its parent, and fixing each parent. We then have that $f$ is a homomorphism from $\overrightarrow{G_{t+1}}$ to the tournament obtained from $\overrightarrow{G_{t}}$ by attaching a loop to each node. Let $\alpha \in \{x,x'\}$ and $\beta \in \{y,y'\}$ as in the first paragraph of the proof. Omitting loops from the image of any path from $\alpha$ to $\beta$ under $f$, we obtain a directed walk from $x$ to $y$ in $G$. Thus, $d_{\overrightarrow{G_{t}}}(x,y) \leq d_{\overrightarrow{G_{t+1}}}(\alpha,\beta)$. The proof follows. \qed
\end{proof}

In an ILTT tournament with a strong base, as noted in the proof of Theorem~\ref{ILTT+_strong}, $d(x,x')= 3.$ Also, note that by the definition of the model, $d(x',x)=1$. Hence, we have the following corollary from these observations and by Lemma~\ref{diam_L(G)}.

\begin{corollary}\label{cord}
If $G=G_0$ is a strong tournament of order at least 3, then an ILTT tournament $G_t$ satisfies $\mathrm{diam}(\overrightarrow{G_t})\leq \max\{\mathrm{diam}(G_0),3\}.$
\end{corollary}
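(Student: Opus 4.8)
The plan is to prove Corollary~\ref{cord} by induction on $t$, using Lemma~\ref{diam_L(G)} as the engine that propagates a diameter bound from one time-step to the next, together with the two special-case distance facts $d(x,x')=3$ and $d(x',x)=1$ established in the proof of Theorem~\ref{ILTT+_strong}. Write $D=\max\{\mathrm{diam}(G_0),3\}$; the claim is $\mathrm{diam}(\overrightarrow{G_t})\le D$ for all $t\ge 1$, and I will in fact carry the slightly stronger statement that $\mathrm{diam}(\overrightarrow{G_t})\le D$ for all $t\ge 0$ (the $t=0$ case being immediate since $\mathrm{diam}(G_0)\le D$ by definition). Note that Theorem~\ref{ILTT+_strong} guarantees each $\overrightarrow{G_t}$ is strong, so all the distances in question are finite and the diameter is well defined.

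First I would record the base case $t=0$, which holds by the definition of $D$. For the inductive step, assume $\mathrm{diam}(\overrightarrow{G_t})\le D$ and consider an arbitrary ordered pair of distinct nodes of $\overrightarrow{G_{t+1}}$. Every such node is either a node of $\overrightarrow{G_t}$ or a clone of one; so the pair has one of the forms $(x,y)$, $(x,y')$, $(x',y)$, or $(x',y')$ with $x,y$ nodes of $\overrightarrow{G_t}$ — and here I must split according to whether the underlying nodes $x$ and $y$ are distinct or equal. If $x\ne y$, then Lemma~\ref{diam_L(G)} gives that each of the four distances equals $d_{\overrightarrow{G_t}}(x,y)\le\mathrm{diam}(\overrightarrow{G_t})\le D$. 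If $x=y$, the only nontrivial ordered pairs of distinct nodes are $(x,x')$ and $(x',x)$, for which the proof of Theorem~\ref{ILTT+_strong} gives $d(x,x')=3\le D$ and $d(x',x)=1\le D$. Hence every ordered pair of distinct nodes of $\overrightarrow{G_{t+1}}$ is at distance at most $D$, so $\mathrm{diam}(\overrightarrow{G_{t+1}})\le D$, completing the induction.

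There is really no serious obstacle here — the corollary is essentially a bookkeeping consequence of the two prior results — but the one point that needs care is making sure the case analysis on node pairs in $\overrightarrow{G_{t+1}}$ is exhaustive and that Lemma~\ref{diam_L(G)} is being invoked only in the regime where it applies, namely for $x\ne y$ as stated. I would also take a moment to note explicitly that Lemma~\ref{diam_L(G)} requires $\overrightarrow{G_t}$ (and $G_0$) to be strong of order at least $3$, which is exactly the hypothesis of the corollary combined with Theorem~\ref{ILTT+_strong}, so there is no gap there. The $x=y$ subcase is the only place where the constant $3$ enters, which is why the bound is $\max\{\mathrm{diam}(G_0),3\}$ rather than simply $\mathrm{diam}(G_0)$.
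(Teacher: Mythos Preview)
Your proposal is correct and follows exactly the approach the paper intends: the paper derives the corollary directly from Lemma~\ref{diam_L(G)} together with the facts $d(x,x')=3$ and $d(x',x)=1$, and you have simply made the induction and the case split on pairs explicit. One tiny bookkeeping point: Lemma~\ref{diam_L(G)} is stated in the paper for $t\ge 1$, while your inductive step first invokes it at $t=0$; the lemma's proof works verbatim for $t\ge 0$, so this is a harmless indexing slip in the paper rather than a gap in your argument.
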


It is straightforward to see that $d(x,x')=2$ in $\overleftarrow{G_t}$. By a result analogous to Lemma~\ref{diam_L(G)} for distances in the ILTT$_d$ model (with proof omitted), we may also derive that if $G_0$ is strong and order at least 3, then $\mathrm{diam}(\overleftarrow{G_t}) \leq  \mathrm{diam}(G_0).$

We next consider the average distance of tournaments from the ILTT and ILTT$_d$ models. The \emph{Wiener index} of a strong tournament $G$ of order $n$ is
$$W(G) = \sum\limits_{(u,v) \in V(G){\times}V(G)} d_G(u,v).$$
The \emph{average distance} of $G$ is
$$L(G) = \frac{W(G)}{n(n-1)}.$$
Note that we do not have to divide by $n \choose 2$ as distances are not necessarily symmetric.

The following formula for $W(\overrightarrow{G_t})$ follows from Lemma~\ref{diam_L(G)} and the definitions.

\begin{theorem}\label{Weiner_R(G)}
Suppose that $G_0$ is strong of order $n \ge 3$. If $t\ge 1$ is an integer, then $$W(\overrightarrow{G_{t+1}}) = 4(2^tn + W(\overrightarrow{G_t})).$$
\end{theorem}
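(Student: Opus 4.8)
The plan is to evaluate $W(\overrightarrow{G_{t+1}})$ directly from its definition by partitioning the sum $\sum_{(u,v)} d_{\overrightarrow{G_{t+1}}}(u,v)$ according to whether each of $u,v$ is a parent or a clone, and whether the two nodes arise from a common node of $\overrightarrow{G_t}$. First I would record two preliminary facts. Since each time-step doubles the node set, $\overrightarrow{G_t}$ has order $n_t = 2^t n$, so $\overrightarrow{G_{t+1}}$ has order $2n_t$. And since $G_0$ is strong of order at least $3$, Theorem~\ref{ILTT+_strong} ensures every $\overrightarrow{G_s}$ is strong, so all distances below are finite and well-defined.

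Next I would split the set of ordered pairs of distinct nodes of $\overrightarrow{G_{t+1}}$ into six classes: (i) parent--parent pairs $(x,y)$ with $x\neq y$; (ii) clone--clone pairs $(x',y')$ with $x\neq y$; (iii) parent--clone pairs $(x,y')$ with $x\neq y$; (iv) clone--parent pairs $(x',y)$ with $x\neq y$; (v) the $n_t$ pairs $(x,x')$; and (vi) the $n_t$ pairs $(x',x)$. By Lemma~\ref{diam_L(G)}, for distinct $x,y\in V(\overrightarrow{G_t})$ each of $d_{\overrightarrow{G_{t+1}}}(x,y)$, $d_{\overrightarrow{G_{t+1}}}(x,y')$, $d_{\overrightarrow{G_{t+1}}}(x',y)$, $d_{\overrightarrow{G_{t+1}}}(x',y')$ equals $d_{\overrightarrow{G_t}}(x,y)$; hence each of classes (i)--(iv) contributes exactly $\sum_{(x,y)} d_{\overrightarrow{G_t}}(x,y) = W(\overrightarrow{G_t})$, for a combined total of $4W(\overrightarrow{G_t})$.

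For the remaining classes I would invoke the two distances noted just after Lemma~\ref{diam_L(G)}: $d_{\overrightarrow{G_{t+1}}}(x,x')=3$ (from the proof of Theorem~\ref{ILTT+_strong}) and $d_{\overrightarrow{G_{t+1}}}(x',x)=1$ (immediate from the model's definition), for every $x\in V(\overrightarrow{G_t})$. Thus class (v) contributes $3n_t$ and class (vi) contributes $n_t$, together $4n_t = 4\cdot 2^t n$. Summing the contributions gives $W(\overrightarrow{G_{t+1}}) = 4W(\overrightarrow{G_t}) + 4\cdot 2^t n = 4\bigl(2^t n + W(\overrightarrow{G_t})\bigr)$, as required. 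As a consistency check, the class sizes total $4n_t(n_t-1) + 2n_t = 2n_t(2n_t-1)$, which is exactly the number of ordered pairs of distinct nodes of $\overrightarrow{G_{t+1}}$, so the six classes form a genuine partition.

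There is no substantial obstacle: the argument is essentially bookkeeping once Lemma~\ref{diam_L(G)} and the two special distances are in hand. The only points requiring a little care are ensuring that the pairs $(x,x')$ and $(x',x)$ are excluded from classes (iii) and (iv) (which is why those classes are taken with $x\neq y$), so that nothing is double-counted or omitted, and confirming that $|V(\overrightarrow{G_t})| = 2^t n$ so that the linear term carries the correct coefficient.
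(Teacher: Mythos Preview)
Your proposal is correct and is precisely the argument the paper has in mind: the paper states only that the formula ``follows from Lemma~\ref{diam_L(G)} and the definitions,'' and your six-class partition of ordered pairs, combined with the special values $d(x,x')=3$ and $d(x',x)=1$, is exactly the bookkeeping that makes this explicit.
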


Theorem~\ref{Weiner_R(G)} allows us to prove the following corollary, which shows that ILTT tournaments satisfy the small world property.
Note that ILTT tournaments tend not to deviate much in average length from that of the base when the base has large order.

\begin{corollary}\label{avg_len_R(G)}
    Let $G_0$ be a strong tournament of order $n \ge 3$. We have the following properties.
    \begin{enumerate}
        \item $W(\overrightarrow{G_t}) = 2^{t+1}(2^{t} - 1)n + 4^{t}W(G_0)$.
        \item $L(\overrightarrow{G_t}) \sim \frac{2+(n-1)L(G_0)}{n}$.
    \end{enumerate}
\end{corollary}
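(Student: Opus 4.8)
The plan is to treat Part~(1) as a solved first‑order linear recurrence and Part~(2) as a direct asymptotic computation built on it.

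For Part~(1), I would argue by induction on $t$, using the recurrence $W(\overrightarrow{G_{t+1}}) = 4\bigl(2^t n + W(\overrightarrow{G_t})\bigr)$ supplied by Theorem~\ref{Weiner_R(G)}. The base case $t=0$ is immediate, since the claimed formula collapses to $W(G_0)$. For the inductive step, I would substitute the inductive hypothesis $W(\overrightarrow{G_t}) = 2^{t+1}(2^t-1)n + 4^t W(G_0)$ into the recurrence; the $W(G_0)$ term automatically becomes $4^{t+1}W(G_0)$, and the only nontrivial bit is collecting the two terms in $n$, namely
$$2^{t+2}n + 4\cdot 2^{t+1}(2^t-1)n = 2^{t+2}n\bigl(1 + 2(2^t-1)\bigr) = 2^{t+2}n\bigl(2^{t+1}-1\bigr),$$
which is exactly the coefficient of $n$ required at level $t+1$. (Alternatively, one could simply quote the standard closed form for a recurrence of the type $a_{t+1} = 4a_t + c\,2^t$.)

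For Part~(2), I would first record that $|V(\overrightarrow{G_t})| = 2^t n$, since each time‑step doubles the node set; hence
$$L(\overrightarrow{G_t}) = \frac{W(\overrightarrow{G_t})}{2^t n\,(2^t n - 1)} = \frac{2^{t+1}(2^t-1)n + 4^t W(G_0)}{2^t n\,(2^t n - 1)}.$$
Expanding, the numerator equals $4^t\bigl(2n + W(G_0)\bigr) - 2^{t+1}n$ and the denominator equals $4^t n^2 - 2^t n$, so after dividing through by $4^t$ and letting $t\to\infty$ (with $n$ fixed) the numerator tends to $2n + W(G_0)$ and the denominator to $n^2$, giving $L(\overrightarrow{G_t}) \sim \bigl(2n + W(G_0)\bigr)/n^2$. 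Finally I would substitute $W(G_0) = n(n-1)L(G_0)$ and simplify to obtain $\frac{2 + (n-1)L(G_0)}{n}$, as claimed.

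There is no serious obstacle here: both parts are elementary once Theorem~\ref{Weiner_R(G)} and the doubling of the order are in hand. The only point requiring a modicum of care is the asymptotic step in Part~(2): one must observe that both numerator and denominator have leading term of order $4^t$ — namely $4^t(2n+W(G_0))$ and $4^t n^2$ — with lower‑order corrections of size $O(2^t)$, so that the ratio to the claimed limit is $1 + O(2^{-t})$; and that $\sim$ is to be read as $t\to\infty$ with the base $G_0$ (hence $n$ and $L(G_0)$) held fixed, consistent with the remark that the average distance stays close to $L(G_0)$ when the base is large.
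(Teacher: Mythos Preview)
Your proposal is correct and follows essentially the same route as the paper: induction via the recurrence of Theorem~\ref{Weiner_R(G)} for Part~(1), and a direct asymptotic computation from the closed form (using $|V(\overrightarrow{G_t})|=2^t n$) for Part~(2). The only cosmetic difference is that in Part~(1) the paper unrolls the recurrence into the geometric series $\sum_{k=0}^t 2^k 4^{t+1-k}=2^{t+2}(2^{t+1}-1)$ rather than substituting the closed form directly as you do, but this is the same argument.
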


\begin{proof}
The proof follows by induction on $t\ge 0$ with the base case immediate.
For the inductive step, by Theorem~\ref{Weiner_R(G)} and the inductive hypothesis, we derive that
    \begin{eqnarray*}
    W(\overrightarrow{G_{t+1}}) &=& 4\cdot 2^{t}n + 4W(\overrightarrow{G_t})\\
    &=& n \left(\sum\limits_{k=0}^t2^k 4^{t+1-k}\right) + 4^{t+1}W(G_0).
    \end{eqnarray*}
As the geometric series $\sum\limits_{k=0}^t2^k4^{t+1-k}$ equals $2^{t+2} (2^{t+1}-1),$ item (1) follows.

For item (2), we have by item (1) that
    \begin{eqnarray*}
    L(G_t) &=& \frac{W(\overrightarrow{G_t)}}{2^t n(2^tn - 1)}\\
    &=&\frac{2^{t+1}(2^t - 1)n}{2^tn(2^tn-1)} + \frac{4^tW(G_0)}{2^t n(2^t n - 1)}\\
    &=&2\frac{2^t-1}{2^t n-1} + (n-1)L(G_0)\frac{2^t}{2^tn - 1}\\
    &\sim&\frac{2}{n} + \frac{(n-1)L(G_0)}{n}.
    \end{eqnarray*}
The proof follows.\qed
\end{proof}

Interestingly, the average distances in ILTT$_d$ tournaments approach the same limit independent of the base. The proof of the following lemma is omitted for space considerations.

\begin{lemma}\label{lemnn}
Let $G=G_0$ be a strong tournament on $n \ge 3$ nodes, and let $\alpha$ be the number of arcs in $G$ not on a directed 3-cycle. For all $t\ge 1,$ we then have that $$W(\overleftarrow{G_t}) = 12{2^{t-1}n\choose2}+\alpha+3\cdot 2^{t-1}n.$$
\end{lemma}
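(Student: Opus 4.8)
The plan is to analyze the distances in $\overleftarrow{G_t}$ exhaustively according to the type of ordered pair of nodes, exactly as the (omitted) analogue of Lemma~\ref{diam_L(G)} for the dual model dictates, and then to sum the contributions. Write $m = 2^{t-1}n$ for the order of $\overleftarrow{G_{t-1}}$, so that $\overleftarrow{G_t}$ has $2m = 2^tn$ nodes. I would first establish the key structural fact, which I expect follows from the dual analogue of Lemma~\ref{diam_L(G)} together with the remark that $d(x,x')=2$ in the dual model: for nodes $x\neq y$ of $\overleftarrow{G_{t-1}}$, the four quantities $d(x,y)$, $d(x,y')$, $d(x',y)$, $d(x',y')$ are all equal, while $d(x,x')=2$ and $d(x',x)=1$ for every node $x$. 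Thus the full Wiener index of $\overleftarrow{G_t}$ decomposes as
\begin{equation*}
W(\overleftarrow{G_t}) = 4\,W(\overleftarrow{G_{t-1}}) + \sum_{x}\bigl(d(x,x') + d(x',x)\bigr) = 4\,W(\overleftarrow{G_{t-1}}) + 3m,
\end{equation*}
since there are $m$ parent nodes and each contributes $2+1=3$.

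Next I would pin down $W(\overleftarrow{G_{t-1}})$ in closed form. Here the crucial point is that, starting already at $t=1$, every arc of $\overleftarrow{G_t}$ lies on a directed $3$-cycle except for a controlled set whose size is exactly the quantity $\alpha$ inherited from the base — in fact I would prove by induction that the number of arcs of $\overleftarrow{G_s}$ not on any directed $3$-cycle equals $\alpha$ for all $s\geq 1$, using that cloning in the dual model sends an arc $(x,y)$ on a $3$-cycle to the arcs $(x,y')$, $(x',y)$, $(y',x')$ each of which is again on a $3$-cycle (via the cycles exhibited in the proof of the second strongness theorem), while the $n$ new arcs $(x',x)$ sit on the $3$-cycles $(x,y'),(y',x'),(x',x)$. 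Consequently, in any strong tournament $H$ on $N$ nodes, if $\beta$ denotes the number of arcs not on a $3$-cycle, then $d(u,v)=1$ for the $\binom{N}{2}+\beta$ ordered pairs joined by an arc counted appropriately... more precisely: an arc on a $3$-cycle contributes distance $1$ in one direction and distance $2$ in the other (its reverse is realized by the other two arcs of the cycle), whereas an arc not on a $3$-cycle still contributes $1$ one way but its reverse distance is larger. I would argue that in $\overleftarrow{G_s}$ for $s\geq 1$ every pair at reverse-distance more than $2$ actually has reverse-distance... and here I must be careful — the cleanest route is to show directly that $W(\overleftarrow{G_s})$ counts: distance $1$ for every arc ($\binom{N}{2}$ arcs, total $\binom{N}{2}$), plus distance $2$ for the reverse of every $3$-cycle arc and distance $3$ (or whatever the base forces) for the $\alpha$ exceptional reverses, plus the diagonal-free... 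Assembling, $W(\overleftarrow{G_s}) = 2\binom{N}{2} + \bigl(\text{extra from the }\alpha\text{ bad arcs}\bigr)$ with $N=2^{s}n$, and matching this against the claimed formula $12\binom{m}{2}+\alpha+3m$ when $s=t$ should force the bookkeeping to be $W(\overleftarrow{G_t}) = 3\binom{2m}{2}\cdot(\text{const}) + \ldots$; I would reverse-engineer the exact constants from the $t=1$ base case computed by hand on $\overleftarrow{G_1}$.

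The main obstacle is the $t=1$ base case and the precise meaning of $\alpha$: I need the dual-model distance lemma (stated as omitted) to guarantee that the only ordered pairs in $\overleftarrow{G_t}$ at distance exceeding $2$ are the $\alpha$ reverses of base arcs not on a $3$-cycle, and that each such pair is at distance exactly $3$ — equivalently, that in $\overleftarrow{G_1}$ an arc $(u,v)$ not on a $3$-cycle has $d(v,u)=3$. I would verify this by producing an explicit length-$3$ path $v\to w'\to u'\to u$ or $v\to u'\to\cdots$ using a node $w$ with $(v,w)$ an arc (available since $G$ has no sink) and tracking the dual cloning rules; strongness of $\overleftarrow{G_1}$ from the previous theorem guarantees such a path exists and the $3$-cycle structure forbids length $2$. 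Once the base case and the "$\alpha$ is invariant" claim are in hand, the recursion $W(\overleftarrow{G_t}) = 4W(\overleftarrow{G_{t-1}})+3\cdot 2^{t-1}n$ unwinds to the stated formula by a routine geometric-series computation, which I would present compactly and leave the arithmetic to the reader, consistent with the paper's style of omitting this proof for space.
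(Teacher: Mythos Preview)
Your recursion $W(\overleftarrow{G_t}) = 4\,W(\overleftarrow{G_{t-1}}) + 3\cdot 2^{t-1}n$ is wrong, and the error lies in the distance lemma you assume for the dual model. In $\overleftarrow{G_{t}}$, the four quantities $d(x,y)$, $d(x,y')$, $d(x',y)$, $d(x',y')$ are \emph{not} all equal to $d_{\overleftarrow{G_{t-1}}}(x,y)$: if $(x,y)$ is an arc of $\overleftarrow{G_{t-1}}$, then $(y',x')$ is the arc between the clones, so $d(x',y')=2$ while the other three equal $1$. More dramatically, in the reverse direction $d(y',x')=1$ and $d(y',x)=2$ regardless of how large $d_{\overleftarrow{G_{t-1}}}(y,x)$ is. This is exactly why the paper only asserts $\mathrm{diam}(\overleftarrow{G_t})\le\mathrm{diam}(G_0)$, not equality. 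You can see the recursion fails numerically: substituting the claimed formula into your recursion yields $12\binom{m}{2}+4\alpha+3m$ for $W(\overleftarrow{G_t})$, which is off by $3\alpha$.

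The part of your proposal that \emph{is} right is the direct analysis you sketch in the middle, and it is not merely the base case---it is the entire proof. Writing $N=2^{t}n$, the formula is equivalent to $W(\overleftarrow{G_t})=3\binom{N}{2}+\alpha$, which says that every unordered pair contributes $1+2$ except for $\alpha$ pairs contributing $1+3$. So what must be shown, for each $t\ge 1$, is: (i) exactly $\alpha$ arcs of $\overleftarrow{G_t}$ fail to lie on a directed $3$-cycle, and these are precisely the $\alpha$ arcs of $G_0$ not on a $3$-cycle; and (ii) for each such arc $(x,y)$ one has $d_{\overleftarrow{G_t}}(y,x)=3$. For (i), run through the five arc types created in passing from $\overleftarrow{G_{t-1}}$ to $\overleftarrow{G_t}$: the arcs $(x,y')$, $(y',x')$, $(x',x)$ always sit on the $3$-cycle $x\to y'\to x'\to x$ (using any out-neighbour $y$ of $x$), the arc $(x',y)$ sits on $x'\to y\to w'\to x'$ for any out-neighbour $w\neq x$ of $y$, and an old arc $(x,y)$ is on a $3$-cycle in $\overleftarrow{G_t}$ iff it was in $\overleftarrow{G_{t-1}}$. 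For (ii), if $(x,y)$ is one of the $\alpha$ bad arcs then every out-neighbour $w$ of $y$ in $\overleftarrow{G_{t-1}}$ satisfies $(x,w)$, giving the $3$-path $y\to w'\to x'\to x$. Summing then gives the formula directly, with no recursion needed.
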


The following corollary gives the average distance for ILTT$_d$ tournaments.

\begin{corollary}
    Let $G_0$ be a strong tournament of order at least 3. We then have that
    $$\lim\limits_{t\rightarrow{\infty}}L(\overleftarrow{G_t}) = \frac{3}{2}.$$
\end{corollary}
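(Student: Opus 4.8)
The plan is to substitute the closed form of Lemma~\ref{lemnn} into the definition of average distance and compute the limit directly; the entire combinatorial content has already been absorbed into that lemma, so what remains is routine bookkeeping. First I would record that $\overleftarrow{G_t}$ has $2^t n$ nodes, since each time-step doubles the order and $G_0$ has $n$ nodes. Hence
$$L(\overleftarrow{G_t}) = \frac{W(\overleftarrow{G_t})}{2^t n\,(2^t n - 1)}.$$

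Next, writing $m = 2^{t-1} n$ so that the order of $\overleftarrow{G_t}$ is $2m$, Lemma~\ref{lemnn} gives $W(\overleftarrow{G_t}) = 12\binom{m}{2} + \alpha + 3m = 6m(m-1) + 3m + \alpha = 6m^2 - 3m + \alpha$, while the denominator equals $2m(2m-1) = 4m^2 - 2m$. Therefore
$$L(\overleftarrow{G_t}) = \frac{6m^2 - 3m + \alpha}{4m^2 - 2m}.$$

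Finally I would observe that $\alpha$, the number of arcs of $G_0$ not lying on a directed 3-cycle, depends only on the fixed base and is thus a constant, whereas $m = 2^{t-1} n \to \infty$ as $t \to \infty$. Dividing numerator and denominator by $m^2$ and passing to the limit yields $\tfrac{6}{4} = \tfrac{3}{2}$, as claimed. There is no genuine obstacle in this last corollary — the work was all in establishing Lemma~\ref{lemnn} — but it is worth an explicit remark that the limit depends on neither the order $n$ nor the parameter $\alpha$, in contrast with the ILTT case of Corollary~\ref{avg_len_R(G)}, where the limiting average distance retains a dependence on both $n$ and $L(G_0)$.
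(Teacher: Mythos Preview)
Your proposal is correct and follows essentially the same route as the paper: substitute the closed form from Lemma~\ref{lemnn} into the definition of $L(\overleftarrow{G_t})$ and take the limit. Your substitution $m=2^{t-1}n$ is a minor cosmetic convenience, but the argument is otherwise identical to the paper's.
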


\begin{proof}
    Let $\alpha$ be the number of arcs not on a directed 3-cycle in $G_0$. By Lemma~\ref{lemnn} we have that for $t\ge 1$
    \begin{eqnarray*}
    L(\overleftarrow{G_t}) &=& \frac{W(\overleftarrow{G_t})}{2^{t}n(2^{t}n-1)}\\
    &=&\frac{6(2^{t-1}n(2^{t-1}n-1))+\alpha+3(2^{t-1}n)}{2^{t}n(2^{t}n-1)}\\
    &\sim&\frac{3}{2}.
    \end{eqnarray*}
 The proof follows.  \qed
\end{proof}

\section{Motifs and universality}

Motifs are small subgraphs that are important in complex networks as one measure of similarity. For example, the counts of 3- and 4-node subgraphs give a similarity measure for distinct graphs; see \cite{plos} for implementations of this approach using machine learning. In the present section, we give results on subtournaments in ILTT and ILTT$_d$.

The following theorem shows that every linear order is a tournament of an ILTT tournament.

\begin{theorem}\label{linord_occurs}
For an integer $t\ge 1$, the linear order of order $t$ is a subtournament of $\overrightarrow{G_t}$ and $\overleftarrow{G_t}$.
\end{theorem}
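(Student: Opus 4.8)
The plan is to prove both halves of the statement at once by induction on $t$, exploiting the fact that the construction below uses only arcs that are common to the ILTT and ILTT$_d$ models. For the base case $t=1$, a linear order of order $1$ is a single node, and both $\overrightarrow{G_1}$ and $\overleftarrow{G_1}$ are nonempty, so there is nothing to prove. (Equivalently, in either model the arc $(x',x)$ already gives a linear order of order $2$, so one may take $t=1,2$ as base cases and induct from there.)

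For the inductive step, suppose $v_1,v_2,\dots,v_t$ induce a linear order of order $t$ in $\overrightarrow{G_t}$, with $(v_i,v_j)$ an arc whenever $i<j$; I will describe the argument for $\overrightarrow{G_{t+1}}$, the case of $\overleftarrow{G_{t+1}}$ being identical. The key point is that $v_1$ dominates every other $v_j$ in this subtournament. In $\overrightarrow{G_{t+1}}$, take the sequence $(v_1',v_1,v_2,\dots,v_t)$, where $v_1'$ is the clone of $v_1$; these $t+1$ nodes are distinct since $v_1'$ is not an original node. I claim they induce a linear order. The arc $(v_1',v_1)$ is present by definition of the model. For each $j$ with $2\le j\le t$, the arc $(v_1,v_j)$ of $\overrightarrow{G_t}$ produces, by the rule adding $(x',z)$ for each arc $(x,z)$ of $G_t$, the arc $(v_1',v_j)$ of $\overrightarrow{G_{t+1}}$. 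Finally, the arcs $(v_i,v_j)$ with $1\le i<j\le t$ survive into $\overrightarrow{G_{t+1}}$ because its subtournament on the original nodes is a copy of $\overrightarrow{G_t}$. Hence $(v_1',v_1,\dots,v_t)$ is a linear order of order $t+1$, completing the induction.

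The one thing that requires care is the direction in which the linear order is extended: one must prepend the clone of the \emph{source} of the current linear order, not, say, append the clone of its sink or clone the whole order at once. Those alternatives fail for ILTT$_d$, since there the subtournament induced by the clones is the dual of $G_t$, so a linear order among clones runs backwards. The construction above avoids this because it uses only the arc from a clone to its parent and the arcs from a clone $x'$ to the out-neighbours of $x$ in $G_t$, and these are the same in both models; the clone--clone arcs, which are the only arcs where the two models differ, never enter the argument. The remaining details (distinctness of the chosen vertices, and the observation that neither model alters arcs among original vertices) are routine.
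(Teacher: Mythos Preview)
Your proof is correct and follows essentially the same approach as the paper: both argue by induction on $t$, prepending the clone $v_1'$ of the source $v_1$ of the current linear order to obtain a linear order of order $t+1$. Your added explanation of why the construction works uniformly for both ILTT and ILTT$_d$ (namely, that only the arcs $(x',x)$ and $(x',z)$ for $z$ an out-neighbour of $x$ are used, never clone--clone arcs) is a nice elaboration, but the underlying idea is identical.
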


\begin{proof}
We give the proof for the ILTT model as the argument is analogous to the ILTT$_d$ model. The proof follows by induction on $t\ge 1$. Suppose \{$x_1,x_2,\ldots,x_{t}$\} is the node set of a linear order in $G_t$. Let $x_0 = x'_1$ in $G_{t+1}$. We then have that the subtournament on $\{x_0,x_1,x_2, \dots ,x_t\}$ is a linear order with $t+1$ nodes in $G_{t+1}.$ \qed
\end{proof}

Despite Theorem~\ref{linord_occurs}, the subtournaments of ILTT tournaments are restricted by their base tournament. As an application of Theorem~\ref{ILTT+_strong} (or seen directly), note that if we begin with a transitive 3-cycle as $G_0$, then for all $t\ge 0$, $\overrightarrow{G_t}$ never contains a directed 3-cycle.

Given a tournament on some finite set of nodes, one can obtain any other tournament on those nodes by simply reversing the orientations of certain arcs. Given two tournaments $G$ and $H$ on the same set of nodes and $m\ge 1$ an integer, we say that $G$ and $H$ \emph{differ by} $m$ arcs if there are $m$ distinct pairs of nodes $x$ and $y$ such that $(x,y)$ is an arc of $G$ while $(y,x)$ is an arc of $H$. We have the following lemma.

\begin{lemma}\label{singlearc_diff}
Let $G_0$ be a tournament. For $t\ge 0$, if $H$ is a subtournament of $\overleftarrow{G_t}$ and $H$ and a tournament $J$ differ by one arc, then $J$ is isomorphic to a subtournament $\overleftarrow{G_{t+1}}$.
\end{lemma}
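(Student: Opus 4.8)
The plan is to realize $J$ inside $\overleftarrow{G_{t+1}}$ by keeping the node set of $H$ but swapping the two endpoints of the reversed arc for their clones. First I would fix notation: write $S = V(H) \subseteq V(\overleftarrow{G_t})$, let $\{a,b\}$ be the pair on which $H$ and $J$ differ, and orient things so that $(a,b)$ is an arc of $H$ while $(b,a)$ is an arc of $J$, with every other arc of $J$ agreeing with the corresponding arc of $H$.

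The two structural facts I would extract directly from the definition of the ILTT$_d$ model are: (i) for every node $x$ of $\overleftarrow{G_t}$ and every other node $w \in V(\overleftarrow{G_t})$, the clone $x'$ has the arc with $w$ oriented in $\overleftarrow{G_{t+1}}$ exactly as $x$ has its arc with $w$ in $\overleftarrow{G_t}$ — this is precisely the rule adding $(x',z)$ whenever $(x,z)$ is an arc and $(y,x')$ whenever $(y,x)$ is an arc, which ILTT$_d$ inherits unchanged from ILTT; and (ii) whenever $(x,y)$ is an arc of $\overleftarrow{G_t}$, the reversed arc $(y',x')$ is an arc of $\overleftarrow{G_{t+1}}$, since in ILTT$_d$ the subtournament of clones is the dual of $\overleftarrow{G_t}$.

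With these in hand I would set $S^* = (S \setminus \{a,b\}) \cup \{a',b'\}$, which has the same cardinality as $S$ since $a'$ and $b'$ are distinct new nodes, and define $\phi : V(J) \to S^*$ by $\phi(a) = a'$, $\phi(b) = b'$, and $\phi(w) = w$ for $w \in S \setminus \{a,b\}$. It remains to verify that $\phi$ is an isomorphism from $J$ onto the subtournament of $\overleftarrow{G_{t+1}}$ induced on $S^*$, which I would do by cases on a pair of nodes. For two nodes of $S \setminus \{a,b\}$, the arc in $\overleftarrow{G_{t+1}}$ is the same as in $\overleftarrow{G_t}$, hence agrees with $H$ and therefore with $J$. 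For $w \in S \setminus \{a,b\}$ paired with $a'$ (respectively $b'$), fact (i) says the arc matches the one between $w$ and $a$ (respectively $b$) in $\overleftarrow{G_t} \supseteq H$, and this agrees with $J$ because $\{a,b\}$ is the only reversed pair. Finally, for the pair $\{a',b'\}$, fact (ii) gives the arc $(b',a')$, which is exactly $\phi$ applied to the arc $(b,a)$ of $J$. (Note that $S^*$ contains no parent–clone pair, so the arc $(x',x)$ never enters the verification.)

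I do not expect a genuine obstacle here: the entire content lies in reading the ILTT$_d$ rules correctly — in particular in using that ILTT$_d$ keeps the ILTT clone-to-original arcs while reversing only the clone-to-clone arcs — and in making sure the pair whose endpoints are cloned is precisely the pair on which $H$ and $J$ disagree. Applying this construction one flipped arc at a time, together with Theorem~\ref{linord_occurs}, is what I would use to drive the universality statement of Theorem~\ref{universality}.
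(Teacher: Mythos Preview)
Your proposal is correct and follows exactly the same approach as the paper: replace the two endpoints $u,v$ of the reversed arc by their clones $u',v'$ and keep all other vertices of $H$ unchanged, yielding an isomorphic copy of $J$ on $(V(H)\setminus\{u,v\})\cup\{u',v'\}$. The paper's proof is terser and omits the case analysis you spell out, but the construction and the underlying reasons are identical.
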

\begin{proof}
Let $(u,v)$ be an arc of $H$ such that $(v,u)$ is an arc of $J$. We replace $u$ by $u'$ and $v$ by $v'$ in $\overleftarrow{G_{t+1}}$. The subtournament on $(V(H)\setminus \{u,v \}) \cup \{u',v'\}$ is an isomorphic copy of $J$ in $\overleftarrow{G_{t+1}}$. See Figure~\ref{fig:differ_by_one}.
\begin{figure}[htpb!]
\centering
\includegraphics[scale=0.45]{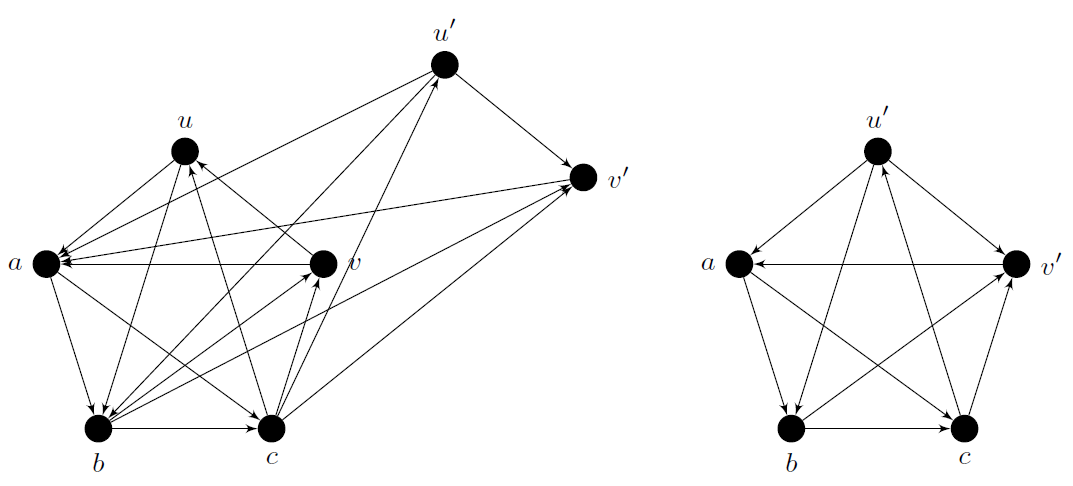}
    \caption{Constructing an isomorphic copy of $J$, which results by reversing the arc $(v,u).$}
    \label{fig:differ_by_one}
\end{figure}
\qed
\end{proof}

Inductively extending Lemma~\ref{singlearc_diff} to tournaments differing by more than one arc and combining this fact with Theorem \ref{linord_occurs}, we find that \emph{any} tournament is subtournament of an ILTT$_d$ tournament with an arbitrary base. We say that the family of ILTT$_d$ tournaments are \emph{universal}.

\begin{theorem}\label{universality}
For each $n\ge 2$, let $\kappa_{n}=n + {n\choose2}$. A tournament $J$ of order $n$ nodes is isomorphic to a subtournament of some $\overleftarrow{G_{r}},$ where $r \le \kappa_n$.
\end{theorem}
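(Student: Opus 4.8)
The plan is to combine Theorem~\ref{linord_occurs} with an iterated application of Lemma~\ref{singlearc_diff}, starting from the linear order and reversing arcs one at a time until we reach $J$. Fix the base $G_0$; the argument will not depend on it. Write $L_n$ for the linear order on $n$ nodes, which by Theorem~\ref{linord_occurs} is a subtournament of $\overleftarrow{G_n}$. Each application of Lemma~\ref{singlearc_diff} costs one time-step and reverses one arc, so the strategy is to bound the number of reversals needed to turn $L_n$ into $J$ by $\binom{n}{2}$, giving a total of at most $n + \binom{n}{2} = \kappa_n$ time-steps.

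First I would record the combinatorial input. Identify the vertex set of $J$ with that of $L_n$, and let $m$ be the number of arcs on which $J$ and $L_n$ disagree. Since a tournament on $n$ nodes has exactly $\binom{n}{2}$ arcs, $m \le \binom{n}{2}$. (If one wants the sharper bound $m \le \binom{n}{2} - (n-1)$, one can choose the identification so that a Hamiltonian path of $J$, which exists by R\'edei's theorem (see~\cite{bang-jensen-gutin-2009}), is oriented forward; this is not needed here.) Reversing those $m$ arcs one at a time produces a sequence of tournaments $L_n = H_0, H_1, \dots, H_m = J$ on the common vertex set in which consecutive terms differ by exactly one arc; each $H_i$ is genuinely a tournament, since reversing a single arc of a tournament yields a tournament.

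Next I would run an induction on $i$ establishing that $H_i$ is isomorphic to a subtournament of $\overleftarrow{G_{n+i}}$ for $0 \le i \le m$. The case $i = 0$ is Theorem~\ref{linord_occurs}. For the inductive step, assume $H_i$ is isomorphic to a subtournament $H_i^{*}$ of $\overleftarrow{G_{n+i}}$; transporting the one-arc modification across this isomorphism gives a tournament $H^{*}$ on $V(H_i^{*})$ that differs from $H_i^{*}$ by one arc and satisfies $H^{*} \cong H_{i+1}$. Applying Lemma~\ref{singlearc_diff} to $H_i^{*}$ and $H^{*}$ then shows $H_{i+1} \cong H^{*}$ is isomorphic to a subtournament of $\overleftarrow{G_{n+i+1}}$, which completes the induction. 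Taking $i = m$ yields $J$ isomorphic to a subtournament of $\overleftarrow{G_r}$ with $r = n + m \le n + \binom{n}{2} = \kappa_n$.

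I expect the only genuine care to be the bookkeeping in the inductive step: Lemma~\ref{singlearc_diff} is stated for two tournaments on a literally common node set, whereas the chain $H_0, \dots, H_m$ and the embedded copies inside the $\overleftarrow{G_t}$ are defined only up to isomorphism, so at each step one must name the vertex set of the current embedded copy and transport the next arc reversal along the isomorphism before invoking the lemma. Once this is set up, the proof is a straightforward telescoping of Theorem~\ref{linord_occurs} with $m \le \binom{n}{2}$ applications of Lemma~\ref{singlearc_diff}, and the bound $r \le \kappa_n$ drops out immediately.
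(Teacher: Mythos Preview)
Your proposal is correct and follows essentially the same approach as the paper: embed $L_n$ in $\overleftarrow{G_n}$ via Theorem~\ref{linord_occurs}, then iterate Lemma~\ref{singlearc_diff} at most $\binom{n}{2}$ times to reach $J$. Your treatment is in fact more careful than the paper's about the isomorphism bookkeeping when passing between the abstract chain $H_0,\dots,H_m$ and the embedded copies, which is a welcome clarification rather than a deviation.
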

\begin{proof}
By Theorem~\ref{linord_occurs}, we may find an isomorphic copy of the linear order $L_n$ with $n$ nodes in $\overleftarrow{G_{n}}.$ We may now iteratively apply Lemma~\ref{singlearc_diff} to $L_n,$ reversing arrows as needed until we arrive at an isomorphic copy of $J$ in some $\overleftarrow{G_{r}}.$ As $J$ and $L_n$ differ by at most ${n\choose2}$ arcs, we have that $r \le \kappa_{n}$. \qed
\end{proof}

\section{Graph-theoretic properties of the models}

This section presents various graph-theoretic results on the ILTT and ILTT$_d$ models. We consider the Hamilitonicity, spectral properties, and the domination number of tournaments generated by the model.

\subsection{Hamiltonicity}
A \emph{Hamiltonian cycle} in a tournament $G$ is a directed cycle that visits each node of $G$ exactly once. The tournament $G$ is said to be $Hamiltonian$ if it has a Hamilton cycle. Suppose $C_1$ is a Hamilton cycle in a tournament $G$ of order $r>3$ passing through nodes $a_1,a_2,\ldots,a_r,$ and $a_1$ in that order. Consider the cycle $C_2$ passing through the nodes $a_k,$ $a_{k+1}',$ and $a_{k+1}$ in that order for $k$ ranging over $\{1,2,\ldots,r\}$ and subscripts taken modulo $r$.
\begin{figure}[htpb!]
\centering
\includegraphics[scale=0.45]{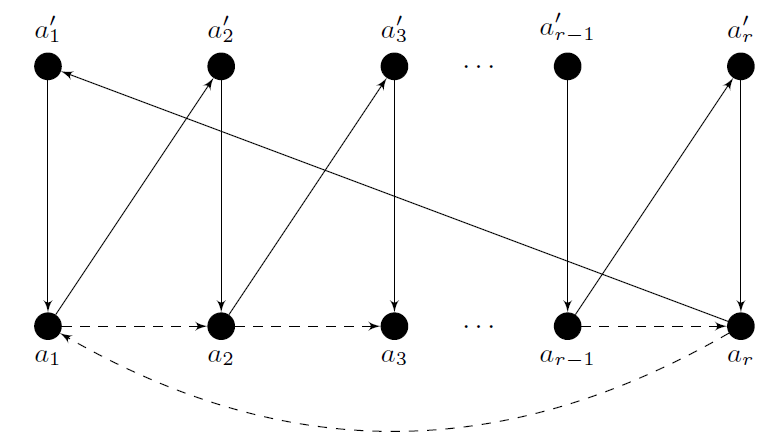}
    \caption{The dashed arrows depict a Hamilton cycle in $G_0$ that is lifted to a Hamilton cycle, depicted by the solid arrows.}
    \label{ILTT+-_Hamilton}
\end{figure}
In both $\overrightarrow{G_1}$ and $\overleftarrow{G_1}$, $C_2$ is a Hamilton cycle. In particular, in $C_2$ each $a_k$ has a unique in-neighbor $a_{k-1}'$ and out-neighbor $a_{k+1}$, and each $a_k'$ has in-neighbor $a_{k-1}$ and out-neighbor $a_k$. Therefore, the following theorem follows by induction on $t\ge 1.$

\begin{theorem}\label{Hamilton_preserved}
If $G_0$ is a Hamiltonian tournament of order at least three, then for all $t\ge 1,$ so are $\overrightarrow{G_t}$ and $\overleftarrow{G_t}$.
\end{theorem}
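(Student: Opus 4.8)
The plan is to isolate a single \emph{lifting step} and then run the induction. The lifting step reads: if $H$ is a Hamiltonian tournament of order at least $3$, then the two tournaments obtained from $H$ by one step of the ILTT construction and of the ILTT$_d$ construction, respectively, are both Hamiltonian. Granting this, the theorem follows by induction on $t\ge 1$: for $t=1$, the tournaments $\overrightarrow{G_1}$ and $\overleftarrow{G_1}$ are exactly the one-step images of the Hamiltonian tournament $G_0$; and for $t\ge 1$, the tournament $\overrightarrow{G_{t+1}}$ (respectively $\overleftarrow{G_{t+1}}$) is the one-step ILTT (respectively ILTT$_d$) image of $\overrightarrow{G_t}$ (respectively $\overleftarrow{G_t}$), which is Hamiltonian of order at least $6$ by the inductive hypothesis, so the lifting step applies.

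For the lifting step I would use exactly the cycle $C_2$ described just before the statement. Fix a Hamilton cycle $C_1=a_1a_2\cdots a_ra_1$ of $H$, and let $C_2$ be the closed directed walk that, for each $k\in\{1,\dots,r\}$ with subscripts read modulo $r$, traverses $a_k\to a_{k+1}'\to a_{k+1}$; equivalently, $C_2$ is the directed cycle $a_1'\to a_1\to a_2'\to a_2\to\cdots\to a_r'\to a_r\to a_1'$. First I would observe that $C_2$ meets each of the $2r$ vertices $a_1,\dots,a_r,a_1',\dots,a_r'$ exactly once, so it is a spanning subgraph once its arcs are verified. Then I would check each arc: $a_{k+1}'\to a_{k+1}$ is present because the arc from a clone to its parent is added in both models; and $a_k\to a_{k+1}'$ is present because $(a_k,a_{k+1})$ is an arc of $H$, and in both models each arc $(u,v)$ of $H$ gives rise to the arc $(u,v')$ at the next time-step. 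Hence $C_2$ is a Hamilton cycle in both one-step images of $H$.

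The reason one construction settles both models simultaneously is that $C_2$ uses no arc between two clones: it strictly alternates between arcs going from a vertex $a_k$ to the clone $a_{k+1}'$ of its out-neighbour and arcs going from a clone back to its parent, and these two families of arcs are defined identically in ILTT and in ILTT$_d$; the two models differ only on the orientation of arcs between clones, which $C_2$ never touches. I do not anticipate a genuine obstacle here -- the lifting step is a direct verification -- and the only point needing a word of care is the smallest admissible base, $|G_0|=3$, that is, $G_0$ the directed triangle. There $r=3$, so $C_2$ has $2r=6$ vertices, and one checks directly that these are distinct and that the six required arcs are each of the two allowed types above. For every $t\ge 1$ the tournament being lifted already has order at least $6>3$, so this boundary case arises only once, in the base of the induction.
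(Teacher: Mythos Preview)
Your proposal is correct and follows essentially the same approach as the paper: you lift a Hamilton cycle $C_1=a_1a_2\cdots a_ra_1$ of the current tournament to the Hamilton cycle $C_2=a_1'a_1a_2'a_2\cdots a_r'a_ra_1'$ in the next time-step, verify that the arcs used (clone-to-parent and parent-to-clone-of-out-neighbour) are present in both constructions, and then induct on $t$. Your added remark that $C_2$ avoids clone--clone arcs, and hence works simultaneously for ILTT and ILTT$_d$, makes explicit what the paper leaves implicit, and your boundary check at $r=3$ is a harmless clarification.
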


Camion~\cite{cam} proved that if $G$ is a tournament of order at least three, then $G$ is Hamiltonian if and only if $G$ is strong. We, therefore, have the following corollary of Theorem~\ref{Hamilton_preserved}.

\begin{corollary}
Suppose that the base tournament $G_0$ is order at least 3.
\begin{enumerate}
\item An ILTT tournament is Hamiltonian if and only if its base is strong.
\item An ILTT$_d$ tournament of order is Hamiltonian if its base has no sink.
\end{enumerate}
\end{corollary}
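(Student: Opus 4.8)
The plan is to deduce everything from Camion's theorem together with the structural facts already at our disposal, namely Theorem~\ref{Hamilton_preserved}, Theorem~\ref{ILTT+_strong}, and the theorem guaranteeing that an ILTT$_d$ tournament over a sink-free base is strong. The key observation is that Camion's theorem lets us move freely between ``Hamiltonian'' and ``strong'' for any tournament of order at least $3$: a Hamilton cycle trivially witnesses strong connectivity, and conversely strong plus order at least $3$ yields a Hamilton cycle. So both items reduce to tracking strongness through the two constructions, which the earlier theorems already do.

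For item (1), I would argue both directions. If the base $G_0$ is strong, then since it has order at least $3$, Camion's theorem gives that $G_0$ is Hamiltonian; Theorem~\ref{Hamilton_preserved} then propagates Hamiltonicity to $\overrightarrow{G_t}$ for every $t\ge1$, while the case $t=0$ is just $G_0$ itself. Conversely, if $\overrightarrow{G_t}$ is Hamiltonian, then it is strong, since its Hamilton cycle connects every ordered pair of nodes; by Theorem~\ref{ILTT+_strong} this forces $G_0$ to be strong when $t\ge1$, and when $t=0$ the conclusion is immediate.

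For item (2), I would first note that ``no sink'' already forces $G_0$ to have order at least $3$, since an order-$2$ tournament is a single arc whose head is a sink, so every $\overleftarrow{G_t}$ with $t\ge1$ has order at least $6$. By the theorem on strong ILTT$_d$ tournaments, a sink-free base makes $\overleftarrow{G_t}$ strong for all $t\ge1$; applying Camion's theorem to these tournaments of order at least $3$ yields that each is Hamiltonian. Only one implication is claimed here, so no converse is needed; indeed the converse fails, as the single-arc base already shows that $\overleftarrow{G_1}$ need not even be strong.

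I do not expect a genuine obstacle: the argument is essentially bookkeeping over earlier results. The one point that needs care is that Theorem~\ref{Hamilton_preserved} is stated for a \emph{Hamiltonian} base rather than a strong one, so the proof must explicitly invoke Camion's theorem on $G_0$ before applying it, and symmetrically must convert Hamiltonicity of $\overrightarrow{G_t}$ back to strongness before feeding it into Theorem~\ref{ILTT+_strong}. Treating the degenerate time-step $t=0$ separately, where the model produces nothing new, is the only other detail.
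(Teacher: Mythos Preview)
Your argument is correct and is exactly what the paper has in mind: it simply cites Camion's theorem and declares the result a corollary of Theorem~\ref{Hamilton_preserved}, with the strongness results for the two models supplying the remaining implications you spell out. One small correction to your closing aside: the single-arc example violates the order-at-least-$3$ hypothesis, and in fact the converse of item~(2) \emph{does} hold under that hypothesis, since a sink of $G_0$ remains a sink of $\overleftarrow{G_1}$ (every original node and every clone points into it), so $\overleftarrow{G_1}$ cannot be Hamiltonian; this does not affect your proof, as the corollary only asserts one direction.
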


\subsection{Spectral properties}

\newcommand{\mat}[4]{\left(\begin{array}{cc}
        #1 & #2\\
        #3 & #4\\
        \end{array}\right)}

\newcommand{\vect}[2]{\left(\begin{array}{cc}
        #1\\
        #2
        \end{array}\right)}

We next consider eigenvalues of the adjacency matrices of ILTT tournaments. Spectral graph theory is a well-developed area for undirected graphs (see \cite{sgt}) but less so for directed graphs (where the eigenvalues may be complex numbers with non-zero imaginary parts). The following theorem characterizes the eigenvalues of ILTT tournaments.

\begin{figure}[htpb!]
\centering
\includegraphics[scale=0.27]{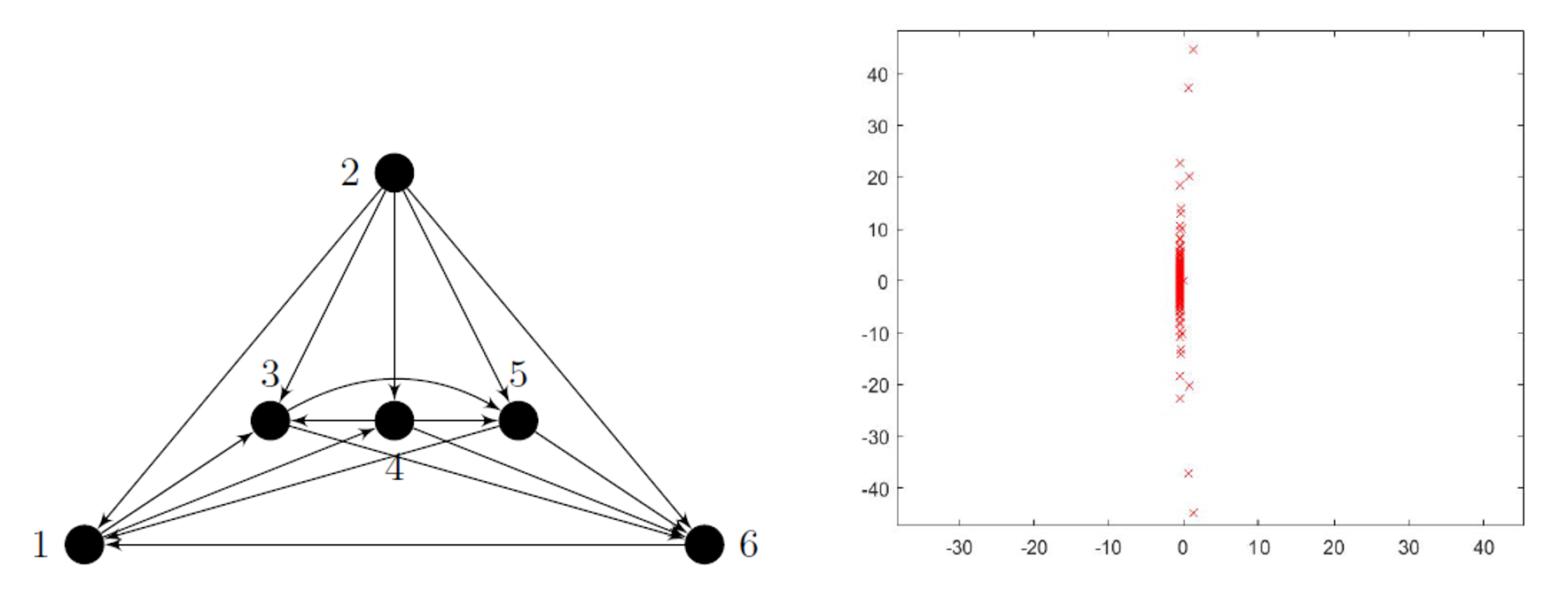}
    \caption{If $G_0$ is the tournament depicted on the left, then the right figure depicts the eigenvalues of the seventh time-step of the ILTT tournament in the complex plane.
    }\label{tttt}
\end{figure}

Theorem~\ref{eig} below gives a characterization of all non-zero eigenvalues of the tournaments arising from the ILTT model in terms of the non-zero eigenvalues of the base. Figure~\ref{tttt} gives a visualization of the eigenvalues in an example of an ILTT tournament.

\begin{theorem}\label{eig}
    If $A_t$ is the adjacency matrix of an ILTT tournament $\overrightarrow{G_t}$ of order $n$, then the following items hold.
    \begin{enumerate}
        \item The value $-1/2$ is not an eigenvalue of $A_{t+1}$.
        \item For each non-zero eigenvalue $\lambda$ of $A_t$, $\mu = \pm (\lambda^2 + \lambda)^{1/2} + \lambda$ is an eigenvalue of $A_{t+1}$ and these are all the non-zero eigenvalues of $A_{t+1}$.
    \end{enumerate}
\end{theorem}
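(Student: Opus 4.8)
The plan is to reduce both statements to a single computation of the characteristic polynomial of $A_{t+1}$. First I would write $A_{t+1}$ as a $2\times 2$ block matrix, ordering the $2n$ nodes of $\overrightarrow{G_{t+1}}$ so that the $n$ original nodes (a copy of $\overrightarrow{G_t}$) come first, followed by their clones in the same order. Reading off the arc rules of the ILTT model — the clones reproduce the adjacencies of their parents both among themselves and toward the original nodes, each clone $x'$ dominates its parent $x$, and every in-neighbour of $x$ in $\overrightarrow{G_t}$ dominates $x'$ — one obtains
$$A_{t+1} = \begin{pmatrix} A_t & A_t \\ A_t + I & A_t \end{pmatrix},$$
where $I$ is the $n\times n$ identity matrix. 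A quick sanity check is that $A_{t+1} + A_{t+1}^{\mathsf T} = J_{2n} - I_{2n}$, as required for a tournament; I would also verify the block form on the order-$3$ base of Figure~\ref{fig:3_cycle_ILTT} before proceeding.

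Writing $A = A_t$, every block of $\mu I_{2n} - A_{t+1}$ is a polynomial in $A$, so the four blocks pairwise commute and the determinant identity for block matrices with commuting blocks gives
$$\det(\mu I_{2n} - A_{t+1}) = \det\bigl((\mu I - A)^2 - A(A+I)\bigr) = \det\bigl(\mu^2 I - (2\mu+1)A\bigr).$$
Since $\lambda_1,\dots,\lambda_n$ are the eigenvalues of $A_t$ listed with multiplicity, the matrix $\mu^2 I - (2\mu+1)A$ has eigenvalues $\mu^2 - (2\mu+1)\lambda_i$, so the characteristic polynomial of $A_{t+1}$ equals $\prod_{i=1}^n\bigl(\mu^2 - 2\lambda_i\mu - \lambda_i\bigr)$.

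Both conclusions then follow by inspecting this factorization. For item (1), substituting $\mu = -1/2$ makes every factor equal to $1/4$, so the product is $(1/4)^n \neq 0$ and $-1/2$ is not an eigenvalue of $A_{t+1}$. For item (2), a factor with $\lambda_i = 0$ contributes only the root $\mu = 0$, while for $\lambda_i \neq 0$ the quadratic $\mu^2 - 2\lambda_i\mu - \lambda_i$ has roots $\mu = \lambda_i \pm (\lambda_i^2 + \lambda_i)^{1/2}$, whose product is $-\lambda_i \neq 0$; hence both roots are non-zero. Therefore the non-zero eigenvalues of $A_{t+1}$ are exactly the numbers $\lambda \pm (\lambda^2 + \lambda)^{1/2}$ as $\lambda$ ranges over the non-zero eigenvalues of $A_t$, which is precisely the claim.

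I expect the main obstacle to be bookkeeping rather than ideas: getting the block form of $A_{t+1}$ exactly right from the verbal description of the model, and invoking the commuting-block determinant formula cleanly. If one prefers to avoid citing that formula, the Schur-complement expansion $\det(\mu I_{2n} - A_{t+1}) = \det(\mu I - A)\det\bigl((\mu I - A) - A(\mu I - A)^{-1}(A+I)\bigr)$ is valid for all but finitely many $\mu$ and extends to a polynomial identity in $\mu$. An eigenvector-based alternative also works — solving $A_{t+1}(u, v)^{\mathsf T} = \mu(u, v)^{\mathsf T}$ forces $v = \tfrac{\mu+1}{\mu}u$ and $Au = \tfrac{\mu^2}{2\mu+1}u$, from which item (1) is immediate — but the determinant route has the advantage of settling multiplicities and the ``these are all'' clause in one stroke.
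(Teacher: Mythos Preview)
Your proposal is correct and takes a genuinely different route from the paper. The paper argues at the level of eigenvectors: starting from a non-zero eigenvalue $\mu$ of $A_{t+1}$ with eigenvector $(\mathbf v,\mathbf w)^{\mathsf T}$, it derives $\mu\mathbf w=(1+\mu)\mathbf v$ and $A_t\mathbf v=\frac{\mu^2}{1+2\mu}\mathbf v$, rules out $\mu=-1/2$ by a direct contradiction, and then runs the converse by exhibiting an explicit eigenvector of $A_{t+1}$ built from an eigenvector of $A_t$. You instead compute the full characteristic polynomial via the commuting-block determinant identity, obtaining $\det(\mu I_{2n}-A_{t+1})=\prod_i(\mu^2-2\lambda_i\mu-\lambda_i)$, and read everything off from that factorization. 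Your approach is arguably cleaner on the ``these are all'' clause and on multiplicities, which the eigenvector argument in the paper leaves implicit; the paper's approach, on the other hand, avoids citing the block-determinant formula and produces explicit eigenvectors of $A_{t+1}$, which could be useful downstream. You even anticipate the paper's method in your final paragraph as an alternative.
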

\begin{proof}
    Suppose $\mu \neq 0$ is an eigenvalue of $A_{t+1}$ and let $\mathbf{v},\mathbf{w}\in\mathbb{C}^n$ not both the zero vector $\mathbf{0}$ be such that $\mu\vect{\mathbf{v}}{\mathbf{w}} = A_{t+1}\vect{\mathbf{v}}{\mathbf{w}}$. We then have that $A_{t+1} = \mat{A_t}{A_t}{I+A_t}{A_t}$. Hence,
    \begin{eqnarray}
        \mu \mathbf{v} &=& A_t\mathbf{v} + A_{t}\mathbf{w}\\
        \mu \mathbf{w} &=& \mathbf{v} + A_t\mathbf{v} + A_t\mathbf{w}.
    \end{eqnarray}
    We then have that $\mu \mathbf{w} = (1 + \mu)\mathbf{v}$. By (1), it follows that $\frac{1+\mu}{\mu}A_t\mathbf{v} + A_t\mathbf{v} = \mu\mathbf{v}$, which gives that $$\frac{1+2\mu}{\mu}A_t\mathbf{v} = \mu \mathbf{v}.$$

    For (1), note that, by (1) and (2), if $\mu = -1/2$, then
    \begin{eqnarray*}
        A_t\mathbf{v} + A_t\mathbf{w} &=& -\frac{1}{2}\mathbf{v}\\
        \mathbf{v} + A_t\mathbf{v} + A_t\mathbf{w} &=& -\frac{1}{2}\mathbf{w}.
    \end{eqnarray*}
    Thus, $\mathbf{v} = -\mathbf{w}$ and $\mathbf{v} = \mathbf{0}$, which is a contradiction. Therefore, $\mu\neq{-1/2}$.

    For (2), note that the computation in the proof of part (1) also shows that $$A_t\mathbf{v} = \frac{\mu^2}{1+2\mu}\mathbf{v},$$ so that $\lambda = \frac{\mu^2}{1+2\mu}$ is an eigenvalue of $A_t$. In particular, $\lambda$ satisfies $\mu^2 - 2\mu\lambda -\lambda = 0$ or $(\mu-\lambda)^2 - \lambda - \lambda^2 = 0$ or $\mu = \pm (\lambda^2+\lambda)^{1/2}+\lambda$.
    Conversely, suppose $\lambda \neq 0$ is an eigenvalue of $A_t$ and $\mathbf{v}\in\mathbb{C}^n$ such that $A_t\mathbf{v} = \lambda{\mathbf{v}}$. Let $\mu$ be a root of the quadratic $x^2 - 2\lambda{x} - \lambda$ and let $\mathbf{w} = \frac{1+\mu}{\mu}\mathbf{v}$. We then have that $\mu\neq0$ and
    \begin{eqnarray*}
       A_{t+1}\vect{\mathbf{v}}{\mathbf{w}} &=& \vect{A_t\mathbf{v}+ \frac{1+\mu}{\mu}A_t\mathbf{v}}{\mathbf{v} + A_t\mathbf{v} + \frac{1+\mu}{\mu}A_t\mathbf{v}}\\
        &=& \vect{\lambda \mathbf{v} + \frac{1+\mu}{\mu}\lambda \mathbf{v}}{(1+\lambda)\mathbf{v} + \frac{1+\mu}{\mu}\lambda \mathbf{v}}\\
        &=&\vect{(2\lambda + \frac{\lambda}{\mu})\mathbf{v}}{(1 + 2\lambda + \frac{\lambda}{\mu})\mathbf{v}}.
    \end{eqnarray*} The equation $\mu^2 -2\lambda\mu -\lambda = 0$ implies that $\frac{\lambda}{\mu} = \mu - 2\lambda$. Hence,
    $$    A_{t+1}\vect{\mathbf{v}}{\mathbf{w}} = \vect{\mu \mathbf{v}}{(1+\mu)\mathbf{v}} = \mu\vect{\mathbf{v}}{\mathbf{w}},$$ so that $\mu$ is an eigenvalue of $A_{t+1}$. \qed
\end{proof}

\subsection{Domination numbers}

Dominating sets contain nodes with strong influence over the rest of the network. Several earlier studies focused on the domination number of complex network models; see \cite{domb,cooper}. In a tournament $G$, an \emph{in-dominating set} $S$ has the property that for each $u$ not in $S$, there is a $v \in S$ such that $(u,v)$. The dual notion is an \emph{out-dominating set}. The cardinality of a minimum order in-dominating set is the \emph{in-domination number} of $G$, written $\gamma^-(G);$ the dual notion is the \emph{out-domination number} of $G$, written $\gamma^+(G).$

In this final subsection, we investigate the domination number of ILTT tournaments. Proofs are omitted owing to space considerations. We first need a lemma relating in- and out-dominating sets of the tournaments $\overrightarrow{G_t}$ and $\overleftarrow{G_t}$.

\begin{lemma}\label{domsets}
    Let $G_0$ be a tournament.
    \begin{enumerate}
        \item If $S$ is an in- or out-dominating set for $\overrightarrow{G_t}$ for $t\ge 1$, then $$(S \cap V(G))\cup\{v \in V(G): v' \in T\}$$ is the same for $G_0$.
        \item If $S$ is an in-dominating set for $G_0$, then $\{v':v\in S\}$ is an in-dominating set for $\overrightarrow{G_t}$.
        \item If $S$ is an out-dominating set for $G_0$, then $S$ is an out-dominating set for $\overrightarrow{G_t}$.
        \item If $S$ is a minimal in- or out-dominating set for $\overrightarrow{G_t}$, then $S$ cannot contain both a node in $V(G_0)$ and its clone.
        \item If $S$ is an in- or out-dominating set for $\overleftarrow{G_t}$, then $$(S\cap V(G_0))\cup\{v\in V(G_0):v'\in S\}$$ is the same for $G_0$.
    \end{enumerate}
\end{lemma}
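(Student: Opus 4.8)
The plan is to prove all five parts by analysing a single cloning step and then iterating, so throughout I would let $H$ stand for $\overrightarrow{G_{k}}$ (or $\overleftarrow{G_k}$) and write $\overrightarrow{H_1}$ for the tournament obtained from $H$ by one ILTT step; the statements for general $\overrightarrow{G_t}$ and $\overleftarrow{G_t}$ then follow by induction on the number of steps. First I would record the ``arc dictionary'' for $\overrightarrow{H_1}$: between parents the arcs agree with $H$; $(x',y')$ is an arc iff $(x,y)$ is; $(x',z)$ is an arc iff $(x,z)$ is; $(y,x')$ is an arc iff $(y,x)$ is; and, crucially, $(x',x)$ is always an arc while $(x,x')$ never is. Equivalently, a clone $x'$ inherits all of its parent's adjacencies to the remaining nodes and their clones, the one exception being the extra out-arc $(x',x)$ to the parent itself. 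This one-directional parent--clone arc is the feature that drives every part of the lemma.

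For parts (2) and (3) I would check the defining domination condition one vertex at a time, splitting the vertices outside the lifted set into parents and clones. In part (3) the set $S\subseteq V(H)$ is kept unchanged: an outside parent is dominated exactly as in $H$, while a clone $w'$ is dominated either by the arc $(w',w)$ to its parent when $w\in S$, or, when $w\notin S$, by transporting a witness of the domination of $w$ in $H$ through the rule $(w',v)\leftrightarrow(w,v)$. In part (2) the roles of node and clone are exchanged: here it is the clone set $\{v':v\in S\}$ that dominates, an outside parent $u$ being reached by $(u',u)$ when $u\in S$ and by $(v',u)\leftrightarrow(v,u)$ otherwise, and an outside clone $w'$ being reached through $(v',w')\leftrightarrow(v,w)$. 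The point to emphasise is that the two domination types lift through genuinely \emph{different} sets (the clones in part (2), the original nodes in part (3)), precisely because the parent--clone arc points only one way.

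Parts (1) and (5) I would obtain from the parent-collapsing map $f$ of Lemma~\ref{diam_L(G)}, which sends each clone to its parent and is a homomorphism from $\overrightarrow{H_1}$ (respectively $\overleftarrow{H_1}$) onto $H$ with a loop added at every node; for the dual model the only arcs whose orientation changes are the clone--clone arcs, and these collapse to loops under $f$ regardless of orientation, so the same map serves. Given a dominating set $S$ of $\overrightarrow{G_t}$ and a node $u$ of $G_0$ lying outside the projection $(S\cap V(G_0))\cup\{v:v'\in S\}$, the domination of $u$ by some $s\in S$ pushes forward under $f$ to an arc between $f(s)$ and $u$; this image is a genuine arc rather than a loop because $u$ lies outside the projection, forcing $s\notin\{u,u'\}$ and hence $f(s)\neq u$. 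Thus the projection dominates $u$ in $G_0$, and collapsing one level at a time gives the statement for all $t$.

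For part (4) I would run an exchange argument off the neighbourhood identities that fall straight out of the arc dictionary, namely $N^{+}(v')=N^{+}(v)\cup\{v\}$ and $N^{-}(v)=N^{-}(v')\cup\{v'\}$. If a minimal in-dominating set contained both $v$ and $v'$, then $N^{+}(v)\subseteq N^{+}(v')$ shows every vertex dominated by $v$ is already dominated by $v'$, while $v$ itself is dominated by $v'$ via the arc $(v',v)$, so $v$ could be deleted; dually, from a minimal out-dominating set one deletes $v'$, using $N^{-}(v')\subseteq N^{-}(v)$ together with $(v',v)$ to absorb $v'$ after its removal. Either deletion contradicts minimality. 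I expect part (4) to be the main obstacle, since the exchange must be carried out consistently for both domination types, and it is exactly the asymmetric arc $(x',x)$ --- the same feature that distinguishes the lifts in parts (2) and (3) --- that makes the redundant vertex removable in each case.
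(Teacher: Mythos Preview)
The paper explicitly omits the proof of this lemma (``Proofs are omitted owing to space considerations''), so there is no argument in the paper to compare against; your overall plan---one cloning step plus induction, the collapsing homomorphism for (1) and (5), and an exchange argument for (4)---is the natural route.

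There is, however, a genuine direction error running through parts (2), (3) and (4). Under the paper's definition an in-dominating set \emph{absorbs} arcs (for $u\notin S$ one has $(u,v)$ with $v\in S$) while an out-dominating set \emph{emits} them. Since $(x',x)$ is always an arc and $(x,x')$ never is, the lift that works for in-domination is $S$ itself (a clone $w'$ is absorbed via $(w',w)$ when $w\in S$), and the lift that works for out-domination is $\{v':v\in S\}$ (a parent $u\in S$ is reached via $(u',u)$). This is the \emph{opposite} of what items (2) and (3) assert; indeed those two items are false as printed---on the linear order $a\to b\to c$ the set $\{c\}$ is in-dominating but $\{c'\}$ fails to in-dominate $\overrightarrow{G_1}$ because $c$ has no arc to $c'$, and $\{a\}$ is out-dominating but fails to out-dominate $a'$ for the same reason. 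The witness arcs you cite in (2) and (3) point the wrong way for the domination type you name; they actually prove the swapped (correct) statements. The same swap infects (4): for in-domination the removable vertex is $v'$ (use $N^{-}(v')\subseteq N^{-}(v)$, and $(v',v)$ in-dominates $v'$ once it is removed), while for out-domination it is $v$ (use $N^{+}(v)\subseteq N^{+}(v')$, and $(v',v)$ out-dominates $v$); you have these two cases interchanged. Once the in/out labels are straightened out, all of your arguments go through.
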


Lemma~\ref{domsets} now gives the following theorem, which describes the evolution of the domination numbers of ILTT and ILTT$_d$ tournaments.

\begin{theorem}\label{tdom}
    If $G_0$ is a tournament, then the following identities hold for all $t\ge 1$:
    \begin{enumerate}
        \item $\gamma^\pm(G_0) = \gamma^\pm(\overrightarrow{G_t})$.
        \item $\gamma^+(G_0) = \gamma^+(\overleftarrow{G_t})$.
        \item $\gamma^-(G_0) \leq \gamma^-(\overleftarrow{G_t})$.
    \end{enumerate}
\end{theorem}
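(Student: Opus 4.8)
The plan is to prove all three statements by establishing matching inequalities, every one of which comes from the five parts of Lemma~\ref{domsets}. It is convenient to carry the argument out one time-step at a time, inducting on $t$ with the preceding tournament playing the role of the base, so it suffices to analyse a single application of the ILTT (resp. ILTT$_d$) operation and then chain the one-step relations through the induction. Throughout I use that a minimum in- or out-dominating set is automatically minimal, so the structural restrictions recorded in parts~(4) and~(5) of Lemma~\ref{domsets} apply to it.

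For item~(1), first the lower bound $\gamma^{\pm}(\overrightarrow{G_t}) \ge \gamma^{\pm}(G_0)$. Take a minimum in- (resp. out-) dominating set $S$ of $\overrightarrow{G_t}$. By minimality and part~(4), $S$ contains no node together with its clone, so the two sets forming the union in part~(1) are disjoint and $(S\cap V(G_0))\cup\{v:v'\in S\}$ has cardinality exactly $|S|$; since part~(1) makes this set an in- (resp. out-) dominating set of $G_0$, we obtain $\gamma^{\pm}(G_0)\le|S|=\gamma^{\pm}(\overrightarrow{G_t})$. For the reverse inequality, part~(2) turns a minimum in-dominating set $S$ of $G_0$ into the equinumerous in-dominating set $\{v':v\in S\}$ of $\overrightarrow{G_t}$, so $\gamma^-(\overrightarrow{G_t})\le\gamma^-(G_0)$, and part~(3) shows that a minimum out-dominating set of $G_0$ is still out-dominating in $\overrightarrow{G_t}$, so $\gamma^+(\overrightarrow{G_t})\le\gamma^+(G_0)$. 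Combining these yields item~(1).

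For items~(2) and~(3), the lower bounds $\gamma^{\pm}(G_0)\le\gamma^{\pm}(\overleftarrow{G_t})$ both come from part~(5): given any in- (resp. out-) dominating set $S$ of $\overleftarrow{G_t}$, the projection $(S\cap V(G_0))\cup\{v:v'\in S\}$ dominates $G_0$ in the same sense and has size at most $|S|$ (no minimality hypothesis is needed here, since an overlap between a node and its clone only shrinks the projection). Applying this to a minimum dominating set proves item~(3) outright. For the matching upper bound in item~(2), one starts from a minimum out-dominating set $S$ of $G_0$ and must exhibit an out-dominating set of $\overleftarrow{G_t}$ of the same size: every parent outside $S$ is covered exactly as in $G_0$, and a clone $w'$ whose parent $w$ is \emph{not} in $S$ is covered by any $v\in S$ with $(v,w)$ an arc of $G_0$, since $(v,w)\in G_0$ forces the arc $(v,w')$ into $\overleftarrow{G_{1}}$. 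The only clones left are those whose parent \emph{does} lie in $S$, and these must be dealt with using the (dualized) copy of $G_0$ on the clone block, by choosing $S$ so that every vertex of the subtournament induced on $S$ has an in-neighbour within $S$.

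The main obstacle is exactly this last point. In the ILTT case the upper bound for out-domination is essentially free via part~(3), but for ILTT$_d$ the reversal of the clone block means an out-dominating set of $G_0$ need not remain out-dominating after one step; one must therefore argue that a minimum out-dominating set can always be chosen, or modified without enlarging it, so that its induced subtournament has no source. Once that selection is justified, the single-step inequalities chain through the induction on $t$ and items~(1)--(3) follow.
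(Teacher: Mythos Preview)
Your derivations of item~(1), item~(3), and the inequality $\gamma^{+}(G_{0})\le\gamma^{+}(\overleftarrow{G_{t}})$ from Lemma~\ref{domsets} are correct and are exactly what the paper has in mind (the paper merely cites the lemma and omits details). The difficulty you isolate for the remaining direction of item~(2) is also identified correctly: using the parent copy $S$ of a minimum out-dominating set, the only uncovered nodes in $\overleftarrow{G_{1}}$ are the clones $w'$ with $w\in S$, and covering them from $S$ requires every $w\in S$ to have an in-neighbour inside $S$.

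However, the resolution you propose cannot work. Any tournament on one or two vertices has a vertex of in-degree $0$, so whenever $\gamma^{+}(G_{0})\le 2$ no choice or modification of a minimum out-dominating set can avoid a source in the induced subtournament. Worse, item~(2) is false as stated: take $G_{0}$ to be the transitive triple $a\to b\to c$, $a\to c$. Then $\gamma^{+}(G_{0})=1$, but in $\overleftarrow{G_{1}}$ no single vertex out-dominates (each parent $u$ fails to reach $u'$, and each clone $u'$ reaches all parents iff $u$ is a source and all other clones iff $u$ is a sink, which cannot happen simultaneously), so $\gamma^{+}(\overleftarrow{G_{1}})=2$.

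What does go through is the version with the superscripts in (2) and (3) interchanged: one has $\gamma^{-}(G_{0})=\gamma^{-}(\overleftarrow{G_{t}})$ and $\gamma^{+}(G_{0})\le\gamma^{+}(\overleftarrow{G_{t}})$. The missing upper bound for in-domination is immediate: if $S$ in-dominates $G_{0}$ then $S$ already in-dominates $\overleftarrow{G_{1}}$, since any clone $u'$ with $u\in S$ is handled by the arc $(u',u)$, and any other vertex is handled exactly as in $G_{0}$ via the parent--parent or clone--parent arcs. This is presumably the intended statement, and with it your plan (Lemma~\ref{domsets} plus this one extra observation, chained through $t$) gives a complete proof.
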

Note that a consequence of Theorem~\ref{tdom} is that the domination numbers of ILTT tournaments equal the domination numbers of their base. An analogous fact holds for the ILTT$_d$ model in the case of the out-domination number.

\section{Conclusion and further directions}

We considered new models ILTT and ILTT$_d$ for evolving, complex tournaments based on local transitivity. We proved that both models generate tournaments with small distances and that ILTT$_d$ tournaments contain all tournaments as subtournaments. We investigated when tournaments from the models were strong or Hamiltonian, and considered their in- and out-domination numbers. We also considered spectral properties of ILTT tournaments.

Studying the counts of various motifs in the models, such as directed and transitive cycles, would be interesting. Insights into the motifs found in the model may shed light on the resulting tournament limits (which are an extension of graphons to tournaments; see \cite{TL}). We did not consider spectral properties of ILTT$_d$ tournaments, and the cop number and automorphism group of tournaments generated by the model would be interesting to analyze. We may also consider randomized versions of the models, where in each time-step, either the subtournament of clones is either the existing tournament or its dual, chosen with a given probability. It would also be interesting to consider the effects of placing a random tournament on the subtournament of clones, where arcs are oriented with a given probability.

\end{document}